\title{Dung's Argumentation Framework: Unveiling the Expressive Power with Inconsistent Databases}
\titlerunning{Expressivity Comparison between Dung's AFs and Inconsistent DBs}
\author{Yasir Mahmood}{DICE group, Department of Computer Science, Paderborn University, Germany}{yasir.mahmood@uni-paderborn.de}{https://orcid.org/0000-0002-5651-5391}{}
\author{Markus Hecher}{CSAIL, Massachusetts Institute of Technology, United States}{hecher@mit.edu}{https://orcid.org/0000-0003-0131-6771}{}
\author{Axel-Cyrille Ngonga Ngomo}{DICE group, Department of Computer Science, Paderborn University, Germany}{axel.ngonga@uni-paderborn.de}{https://orcid.org/0000-0001-7112-3516}{}
\authorrunning{Y.~Mahmood, M.~Hecher, and A-C.~Ngonga Ngomo}
\keywords{Abstract Argumentation, Databases, Integrity Constraints, Functional and Inclusion dependencies,  Repairs, Expressive Power} 
\begin{document}

\maketitle

\begin{abstract}
	The connection between inconsistent databases and Dung's abstract argumentation framework has recently drawn growing interest.
	Specifically, an inconsistent database, involving certain types of integrity constraints such as functional and inclusion dependencies, can be viewed as an argumentation framework in Dung's setting.
	Nevertheless, no prior work has explored the exact expressive power of Dung's theory of argumentation when compared to inconsistent databases and integrity constraints.
	In this paper, we close this gap by arguing that an argumentation framework can also be viewed as an inconsistent database.
	We first establish a connection between subset-repairs for databases and extensions for AFs considering conflict-free, naive, admissible and preferred semantics.
	Further, we define a new family of attribute-based repairs based on the principle of maximal content preservation.
	The effectiveness of these repairs is then highlighted by connecting them to stable, semi-stable, and stage semantics.
	Our main contributions include translating an argumentation framework into a database together with integrity constraints. Moreover, this translation can be achieved in polynomial time, which is essential in transferring complexity results between the two formalisms.
	
\end{abstract}

\section{Introduction}
Formal argumentation serves as a widely applied framework for modeling and evaluating arguments and their reasoning, finding application in various contexts.
In particular, Dung's abstract argumentation framework~\cite{dung1995acceptability} has been specifically designed to model conflict relationships among arguments. 
An abstract argumentation framework (AF) represents arguments and their conflicts through directed graphs and allows for a convenient exploration of the conflicts at an abstract level.
The semantics for AFs is described in terms of sets of arguments (called extensions) that can be simultaneously accepted in a given framework.

A related yet distinct domain, with its primary focus on addressing inconsistent information, involves repairing knowledge bases and consistent query answering (CQA)~\cite{ArenasBC99,chomicki2007consistent,leopoldo2011database,bienvenu2013tractable}.
The goal there is to identify and \emph{repair} inconsistencies in the data and obtain a consistent knowledge base (KB) that satisfies the imposed constraints.
The current research on repairs and the theory of argumentation exhibit some overlaps.
Indeed, both (CQA for inconsistent KBs and instantiated argumentation theory) address reasoning under inconsistent information~\cite{croitoru2013}. 

While the connection between AFs and inconsistent databases has gained significant attention, no prior work 
has investigated the \emph{precise expressive power} of Dung's theory of argumentation in terms of integrity constraints (ICs).
In this paper, we take on this challenge and resolve the exact expressive power of Dung's AFs.
This is achieved by expressing AFs in terms of inconsistent databases where ICs include functional (FDs) and inclusion~dependencies (IDs).
As extensions in an AF are subsets of arguments satisfying a semantics, the association with relational databases is clarified through subset-repairs~\cite{CHOMICKI200590}.
Precisely, arguments are seen as database tuples, which together with ICs model the arguments interaction.

We \emph{complete the mutual relationship} between inconsistent databases and argumentation frameworks in Dung's setting, thereby strengthening the connection between the two formalisms as anticipated earlier 
\cite{mahmood2024computing}. 
Then, the conflict relationship between arguments closely resembles the semantics of functional dependencies, while the defense/support relation mirrors that of inclusion dependencies.
We establish that \emph{AFs can be seen as inconsistent databases}, 
as so far only the converse has already been established~\cite{bienvenu2020querying,mahmood2024computing}.
This strong connection offers a \emph{tabular} representation of the \emph{graphical} AFs and demonstrates that FDs and IDs alone suffice to encode argument interactions in an AF.



In relational databases, the prominent notions of repairs include set-based repairs~\cite{arenas1999consistent,CHOMICKI200590,tenCate:2012}, attribute-based repairs~\cite{Wijsen:2003}, and cardinality-based repairs~\cite{LopatenkoB07}.
Here, we focus on subset-repairs. 
A \emph{subset-repair of} an \emph{inconsistent database} is obtained by removing tuples from the original database such that integrity constraints are no longer violated.
It is known~\cite{CHOMICKI200590} that for large classes of constraints (FDs and denial constraints), the
restriction to deletions 
suffices to remove integrity violations.
Subset-maximality is employed in this setting to assure minimal tuple removals.

Finally, we \emph{introduce a new family} of (attribute-based) database repairs based on a pre-existing principle of maximal content preservation.
The prior research on attribute-based repairs focuses on updating  attribute values~\cite{Wijsen:2003,FlescaCQA05,bertossi2008complexity} and does not consider the setting of subset-repairs. 
We define the notion of covering repairs that maximally (fully) preserve the attribute values in a database.
In other words, repairs that encompass a larger set of values for their attributes are preferred among all subset-repairs.
We propose this novel family of repairs as a topic of independent interest.
The relevance and practical implications of these repairs are underlined by connecting them to various AF~semantics.

\paragraph*{Contributions.}
An overview of our main contributions is depicted in Table~\ref{tab:cont}. 
In details, we establish the following.
\begin{itemize}
	\item We present a database view for Dung's theory of argumentation and prove that an AF can be seen as an inconsistent database in the presence of functional and inclusion dependencies. This also establishes the exact expressive power of AFs in terms of integrity constraints. 
	\item We prove that the extensions of an AF correspond precisely to the subset-repairs of the resulting database for conflict-free, admissible, naive and preferred semantics.
	\item We propose a new family of subset-repairs based on 
	\emph{maximal content preservation}. While being of independent interest, they further tighten the connection of AFs to databases for stable, semi-stable and stage semantics. 
\end{itemize}

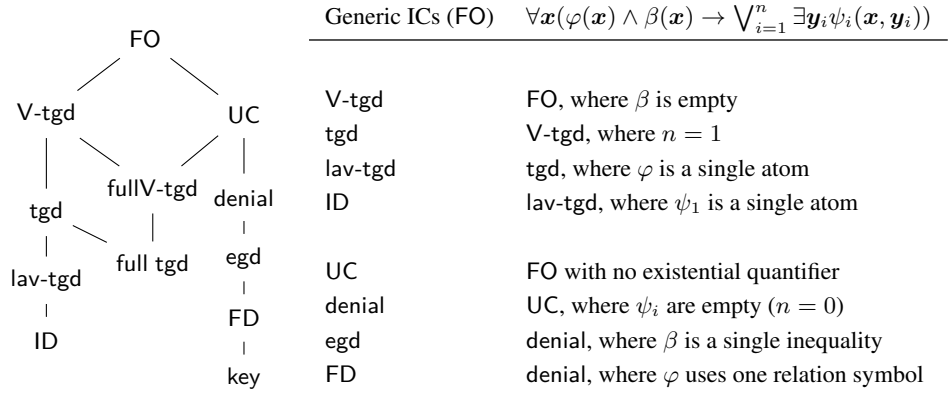
\begin{figure}
\centering
		\begin{tikzpicture}[level distance=1.7em, sibling distance=5em, minimum height=1.75em, level 2/.style={sibling distance=3em},
		every node/.style={scale=1, font=\small},
		edge from parent/.style={thick,-,black, draw}]
		\node (fo) at (0,-.2)  {$\FO$};
		\node (vtgd) at (-1.3,-1.2) {$\vtgd$};
		\node (UC) at (1.3,-1.2)  {$\UC$};
		\node (tgd) at (-1.3,-2.5) {$\tgd$};
		
		\node (fullvtgd) at (0.1,-2.2) {$\fullvtgd$};
		\node (fulltgd) at (0.1,-3.2) {$\fulltgd$};
		
		\node (lavtgd) at (-1.3,-3.4) {$\lavtgd$};
		\node (id) at (-1.3,-4.2) {$\id$};
		
		\node (denial) at (1.3,-2.3) {$\denial$};
		\node (egd) at (1.3,-3.1) {$\egd$};
		\node (fd) at (1.3,-3.9) {$\fd$};
		\node (key) at (1.3,-4.7) {$\key$};
		
		\foreach \f/\g in {fo/vtgd, fo/UC, UC/denial, denial/egd, egd/fd, fd/key, UC/fullvtgd, fullvtgd/fulltgd, vtgd/fullvtgd, tgd/fulltgd, vtgd/tgd, tgd/lavtgd,lavtgd/id} {
			\draw[-] (\f) -- (\g);
		}
	\end{tikzpicture}
	\hspace{.05cm}
	\begin{tikzpicture}[level distance=1.7em, sibling distance=5em, minimum height=1.75em, level 2/.style={sibling distance=3em},
		every node/.style={scale=1, font=\large},
		edge from parent/.style={thick,-,black, draw}]
		
		\node (uc) at (0,0)  {
			\begin{tabular}{l l}
				Generic ICs ($\FO$) & $\forall \tuple{x} (\varphi(\tuple{x})\land \beta(\tuple x)\rightarrow \bigvee_{i=1}^n \exists \tuple{y}_i \psi_i(\tuple x, \tuple {y}_i))$\\\midrule \\
				$\vtgd$   & $\FO$, where $\beta$ is empty \\ 
				$\tgd$    & $\vtgd$, where $ n=1$ \\ 
				$\lavtgd$ & $\tgd$, where $\varphi$ is a single atom \\
				$\id$ & $\lavtgd$, where $\psi_1$ is a single atom \\
				\\
				
				$\UC$     & $\FO$ with no existential quantifier \\
				$\denial$ & $\UC$, where $ \psi_i$ are empty ($n=0$)\\
				$\egd$    & $\denial$, where $\beta$ is a single inequality \\
				$\fd$ &  $\denial$, where $\varphi$ uses one relation symbol
				
		\end{tabular}};

	\end{tikzpicture}
	\caption{Hierarchy of ICs and syntactic form for most commonly studied constraints~\cite{arming2016complexity}. Formulas $\varphi$ and $\psi_i$ are conjunctions of database atoms and $\beta$ is a formula using only (in)equality symbols.}
	\label{fig:ICs}
\end{figure}

Naturally, an AF~$(A,R)$ can be seen as a database with a unary relation $A$ (arguments) and a binary relation $R$ (attacks).
%
%
Our main contributions indicate that AFs are as low in expressive power as DBs \emph{with only FDs and IDs}.
Both types of constraints are located at the lower ends of ICs hierarchy in terms of expressive power as highlighted in Figure~\ref{fig:ICs}. 
Interestingly, while the attack relation fully characterizes an AF, FDs alone can only capture this as a conflict.
Fundamentally, FDs are less expressive than the attack relation and require additional support. We provide this support by IDs, which can represent the defense relation between arguments but not the conflicts.
Our reductions \emph{broaden the applicability} of systems based on evaluating DBs with ICs, e.g., \cite{dixit2019sat,kolaitis2013efficient}.
Since conjunctive queries offer a powerful tool for analyzing databases, a DB perspective on AFs enables fine-grained reasoning.
This enables \emph{queries beyond} extension existence or credulous/skeptical reasoning, a topic which has been motivated earlier~\cite{dvovrak2012abstract}.

Resolving the expressivity of argumentation frameworks also has a wider impact.
In the argumentation community, extensions and generalizations for AFs are actively proposed and researched, such as acceptance conditions in terms of 
constraints~(Coste-Marquis et al.~\cite{Coste-MarquisDevredMarquis06a}, Alfano et al.~\cite{alfano2021argumentation})
or abstract dialectical frameworks~\cite{BrewkaWoltran10}. 
Our results, indicating that AFs have \emph{limited expressive power}, 
underline the importance of this research area.
Moreover, 
as we will show, stable, semi-stable, and stage semantics maximize certain aspects (range) of accepted arguments. 
This allows to define repairs maximizing certain attribute values for databases. 
Such repairs introduce a \emph{set-level preference} between repairs (based on data coverage), which has not been considered before. 

\paragraph*{Related Works.}
AFs have been explored extensively for reasoning with inconsistent KBs 
\cite{cayrol1995relation,vesic2012,AriouaCV17,YunVC20,bienvenu2020querying} and explaining query answers~\cite{croitoru2014query,arioua2015query,hecham2017empirical}. 
The common goal involves formally establishing a connection between KBs and AFs, thus proving the equivalence of extensions to the repairs for a KB.
This yields an \emph{argumentative view} of the inconsistent KBs and allows implementing the CQA-semantics via AFs. 
FDs and IDs are two most commonly studied ICs in databases~\cite{CHOMICKI200590,livshits2020computing}.
The translation from 
DBs to AFs are known,
proving that subset-repairs for an inconsistent database with FDs and IDs align with preferred extensions in the resulting AF~\cite{mahmood2024computing}. 
Finally, \cite{konig2022just} provides several translations between 
Assumption Based Argumentation \cite{bondarenko1997abstract}, Claim-Augmented Frameworks~\cite{dvovrak2020complexity}, and Argumentation Frameworks with Collective Attacks~\cite{NielsenP06}.
We contribute to this line of research by providing translations from AFs to DBs.

\begin{table*}[t]
	\centering
		\resizebox{.8\textwidth}{!}{%
		\begin{tabular}{c@{\; }c@{\; }c@{\; }c@{\; }c@{\; }c@{\; } c@{\; }}
			\toprule
			$\sigma $ & Repairs 	& Maximality 	&  ICs  
			& Table Size & FDs/IDs Size & Refs.
			\\ \toprule  
			$\conf$ & $\repairs$ & --- & FDs 
			& $ |A| \times (|A|+1)$ & $|A|$ & Thm.~\ref{thm:conf-naive}/\ref{thm:conf-naive-time}\\
			$\naive$ & $\repairsmax$ & Subset & FDs 
			& $ |A| \times (|A|+1)$ & $|A|$& Thm.~\ref{thm:conf-naive}/\ref{thm:conf-naive-time} \\
			$\adm$ & $\repairs$	& ---& FDs+IDs  & $|A| \times 3(|A|+1)$ & $|A|$/$(|A|+1)$ & Thm.~\ref{thm:adm-pref}/\ref{thm:AF-time}\\
			$\pref$ & $\repairsmax$	& Subset & FDs+IDs 
			& $|A| \times 3(|A|+1)$ & $|A|$/$(|A|+1)$ & Thm.~\ref{thm:adm-pref}/\ref{thm:AF-time} \\
			$\stab$ & $\repairsfullcov$	& Full-covering	& FDs+ID & $|A| \times (2|A|+3)$ & $|A|$/$1$ & Thm.~\ref{thm:stab-stag}/\ref{cor:stab-stag-time}\\
			$\stag$ & $\repairsmaxcov$	& Max-covering	& FDs+ID & $|A| \times (2|A|+3)$ & $|A|$/$1$ & Thm.~\ref{thm:stab-stag}/\ref{cor:stab-stag-time}\\
			$\semistab$ & $\repairsmaxcov$	& Max-covering & FDs+IDs & $|A| \times 3(|A|+1)$ & $|A|$/$(|A|+1)$ & Thm.~\ref{thm:stab-stag}/\ref{thm:AF-time}\\
			\bottomrule
		\end{tabular}
		}
	\caption{Overview of our main contributions. The AF-semantics~$\sigma$ (column-I) corresponds to repairs (column-II) with the type of maximality imposed (column-III), followed by integrity constraints needed to simulate AF-semantics (column-IV), the size of the resulting database tables (column-V) and the size of sets of ICs (column-VI), and references to the proofs (column-VII). The size of the database is represented as the (number of rows)$\times$(number of columns) for number of arguments $|A|$ in the AF.
	}\label{tab:cont}
\end{table*}
\section{Preliminaries}\label{sec:preli}
In the following, we briefly recall the relevant definitions. 

\smallskip
\noindent\textbf{Abstract Argumentation.}
We use Dung's argumentation framework~\cite{dung1995acceptability} and consider non-empty finite sets of arguments~$A$.
An \emph{(argumentation) framework~(AF)} is a directed graph~$\calF=(A, R)$, where $A$ is a set of arguments and the relation $R \subseteq A\times A$ represents direct attacks between arguments.
Let $S\subseteq A$, an argument~$a \in A$ is \emph{defended by $S$ in $\calF$}, if for every $(b, a) \in R$ there exists $c \in S$ such that $(c, b) \in R$.
For $a,b\in A$ such that $(b,a)\in R$, we denote by $\defend{b}{a}\dfn\{c \mid  (c,b)\in R\}$ the set of arguments defending $a$ against the attack by $b$.
The characteristic function $\adef(S)\colon 2^A \rightarrow 2^A$ of $\calF$ defined as $\adef(S) \dfn \{ a \mid a \in A, a \text{ is defended by $S$ in $\calF$}  \}$ assigns each set the arguments it defends.
The \emph{degree} of an argument $a\in A$ is the number of arguments attacking $a$ or attacked by $a$. The \emph{degree of} $\calF$ is the maximum degree of any $a\in A$.

In abstract argumentation one is interested in computing \emph{extensions}, which are subsets~$S \subseteq A$ of the arguments that have certain properties.
The set~$S$ of arguments is called \emph{conflict-free in~$\calF$} if $(S\times S) \cap R = \emptyset$.
Let $S$ be conflict-free, then $S$ is
\emph{naive in $\calF$} if no $S' \supset S$ is \emph{conflict-free} in $\calF$;
\emph{admissible in $\calF$} if every $a \in S$ is \emph{defended by $S$ in $\calF$}.
%
Further, let $S^+_R:=S\cup\{\, a\mid (b,a)\in R, b \in S\, \}$ and 
$S$ be admissible. Then, $S$ is
\emph{complete in~$\calF$} if $\adef(S) = S$;
\emph{preferred in~$\calF$}, if no $S' \supset S$ is \emph{admissible in $\calF$};
\emph{semi-stable in $\calF$} if no admissible set $S' \subseteq A$ in~$\calF$ with~$S^+_R\subsetneq (S')^+_R$ exists; and 
\emph{stable in~$\calF$} if every $b \in A \setminus S$ is \emph{attacked} by some $a \in S$.
Finally, a conflict-free set~$S$ is \emph{stage in $F$} if there is no conflict-free~$S'\subseteq A$ in~$F$ with~$S^+_R\subsetneq (S')^+_R$.
For each semantic~$\sigma \in \{\naive, \adm, \comp, \pref, \semistab, \stab,\stag\}$, $\sigma(\calF)$ denotes the set of \emph{all extensions} of semantics~$\sigma$ in $\calF$. 
%
%

%

\begin{example}\label{intro:ex-AF}
	Consider the AF~$\calF =(A,R)$ depicted in Figure~\ref{fig:ex-AF}.
	Then, $\sigma(\calF) = \{\{b,d\},\{a\}\}$ for $\sigma\in \{\pref,\stab,\comp\}$.
	Further,
	$\conf(\calF) = \pref(\calF)\cup \{\{x\} \mid x\in A\}\cup \{\emptyset\}$, 
	$\naive(\calF) = \pref(\calF) \cup \{\{c\}\}$, and
	$\adm(\calF) =  \{\{b,d\},\{b\},\{a\},\emptyset \}$.
	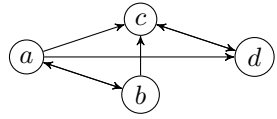
\begin{figure}[t]
		\centering
		\begin{tikzpicture}[scale=.5,arg/.style={circle,draw=black,fill=white,inner sep=.75mm}]
			\node[arg] (w) at (0,1) {$a$};
			\node[arg] (x) at (3,0) {$b$};
			\node[arg] (y) at (3,2) {$c$};
			\node[arg] (z) at (6,1) {$d$};
			\foreach \f/\t in {w/x,x/w,w/y, x/y, y/z,z/y, w/z}{
				\path[-stealth',draw=black] (\f) edge (\t) ;
			}
		\end{tikzpicture}
		\caption{Argumentation framework from Example~\ref{intro:ex-AF}.}
		\label{fig:ex-AF}
	\end{figure}
\end{example}

\noindent\textbf{Databases and Repairs.}
For our setting, an instance of a \emph{database (DB)} is a single table denoted as $T$ since it suffices to prove the connection to AFs.
Each entry in the table is called a \emph{tuple} which is associated with a unique identifier (depicted in boldface $\dbtuple{t}\in T$).
Formally, a table corresponds to a relational schema denoted as $T(x_1 ,\ldots, x_n)$, where $T$ is the relation name and $x_1,\ldots,x_n$ are distinct {attributes}.
We denote individual attributes by small letter (e.g., $x,y$) and reserve capital letters ($X,Y$) for sequences of attributes.
For an attribute $x$ and tuple $\dbtuple{s}\in T$, $\dbtuple{s}[x]$ denotes the value taken by $\dbtuple{s}$ for the attribute $x$ and for a sequence $X=(x_1,\ldots,x_k)$, $\dbtuple{s}[X]$ denotes the sequence $(\dbtuple{s}[x_1],\ldots, \dbtuple{s}[x_k])$.
For an instance $T$, $\dom(T)$ denotes the \emph{active domain} of $T$ defined as the collection of all the values occurring in any tuple in $T$.
We define the size of a table $T$ with $m$ tuples (rows) and $n$ attributes (columns) as $m\times n$.

Let $T(x_1, \ldots, x_n)$ be a schema and $T$ be a database.
In the following, we employ commonly used definitions for FDs and IDs.
A \emph{functional dependency} (FD) over $T$ is an expression of the form $\depas{X}{Y}$ 
for sequences $ X, Y$ of attributes in $T$.
A database $T$ satisfies $\depas{X}{Y}$, denoted as $T\models (\depas{X}{Y})$ if for all $\dbtuple{s},\dbtuple{t}\in T$: if $\dbtuple{s}[X]=\dbtuple{t}[X]$ then $\dbtuple{s}[Y]=\dbtuple{t}[Y]$.
That is, every pair of tuples from $T$ that agree on their values for attributes in $X$ also agree on their values for $Y$.
Moreover, an \emph{inclusion dependency} (ID) is an expression of the form $\inca{X}{Y}$ for two sequences $X$ and $Y$ of attributes with same length.
Then, $T$ satisfies $\inca{X}{Y}$ ($T\models \inca{X}{Y}$) if for each $\dbtuple{s}\in T$, there is some $\dbtuple{t}\in T$ such that $\dbtuple{s}[X]=\dbtuple{t}[Y]$.
Let $i\dfn \incas{X}{Y}\in I$ be an ID and $\dbtuple{s}\in T$, we say that a tuple $\dbtuple{t}\in T$ \emph{supports} $\dbtuple{s}$ for the ID $i$ if $\dbtuple{s}[X]=\dbtuple{t}[Y]$. 
We denote by~$\support{i}{\dbtuple{s}}\dfn\{\dbtuple{t} \mid \dbtuple{t}[Y]=\dbtuple{s}[X] \}$ the collection of tuples \emph{supporting} $\dbtuple{s}$ for $i$.


Let $T$ be a database and $B$ be a collection of FDs and IDs.
Then $T$ is \emph{consistent} with respect to $B$, denoted as $T\models B$, if $T\models b$ for each $b\in B$.
Further, $T$ is \emph{inconsistent} with respect to $B$ if there is some $b\in B$  such that $T\not \models b$.
A \emph{subset-repair} of $T$ with respect to $B$ is a subset $P\subseteq T$ which is consistent with respect to $B$. 
Moreover, $P$ is a \emph{maximal repair}\footnote{The standard repair definition insists on subset-maximality. 
	Our relaxed version clarifies the link to AF-semantics without requiring maximality. Nevertheless, one can substitute (our) \emph{repairs} with \emph{sub-repairs} to reserve \emph{repairs} for \emph{subset-maximal repairs}.} if there is no set $P'\subseteq T$ such that $P'$ is also consistent with respect to $B$ and $P \subset P'$.
We simply speak of a repair when we intend to mean a subset-repair and often consider a database $T$ without explicitly writing its schema.
Let $\calD= \langle T,D\rangle $ where $T$ is a database and $D$ is a set of dependencies, then $\repairs(\calD)$ (resp., $\repairsmax(\calD)$) denotes the set of all (maximal) repairs for $\calD$.
Slightly abusing the notation,  we call 
a table $T$ and an instance $\langle T,D\rangle$, a database.

\begin{example}\label{ex:intro-rep}
	Consider $\calD = \langle T,D \rangle$ with database $T=\{\dbtuple{s}_i \mid i\leq 6\}$ as depicted in Table~\ref{tab:intro-rep} with FD $f\dfn  \depas{\texttt{Tutor},\texttt{Time}}{\texttt{Room}}$, and ID $ \incas{\texttt{Advisor}}{\texttt{Tutor}}$. 
	Informally, a tutor in any time slot can be in at most one room  (FD) and an advisor for each course must be a tutor (ID).
	$\calD$ is inconsistent since $\{\dbtuple{s}_1,\dbtuple{s}_2\}\not\models f, \{\dbtuple{s}_3,\dbtuple{s}_4\}\not\models f$ and there is no $\dbtuple{s}_i\in T$ with  $\dbtuple{s}_6[\texttt{advisor}]=\dbtuple{s}_i[\texttt{tutor}]$.
	Then, 
	a subset containing exactly one tuple from each set $\{\dbtuple{s}_1,\dbtuple{s}_2\}$, $\{\dbtuple{s}_3,\dbtuple{s}_4\}$ and $\{\dbtuple{s}_5\}$ is a maximal repair for $\calD$.
	Further, $\{\dbtuple{s}_1\},\{\dbtuple{s}_4\},\{\dbtuple{s}_1,\dbtuple{s}_3\}$ are  repairs but not maximal.
	\begin{table}
		\centering
		\begin{tabular}{c@{\; }c@{\; }c@{\; }c@{\; }c@{\; }c@{\; }}\toprule
			$T$ & \texttt{Tutor} 	& 	\texttt{Time}		& 	\texttt{Room} 	& 	\texttt{Course} & \texttt{Advisor} \\ \toprule
			$\dbtuple{s}_1$ & Alice	& TS-1	& A10	& Logic-I	& Alice 	\\
			$\dbtuple{s}_2$ & Alice	& TS-1	& B20 	& Algorithms	& Carol  	\\
			$\dbtuple{s}_3$ & Bob	& TS-2	& B20 	& Statistics	& Alice 	\\
			$\dbtuple{s}_4$ & Bob	& TS-2	& C30 	& Calculus	& Bob 	 \\
			$\dbtuple{s}_5$ & Carol	& TS-3	& C30 	& Calculus	& Bob 	 \\
			$\dbtuple{s}_6$ & Carol	& TS-3	& B20 	& Algorithms	& Dave \\
			\bottomrule
		\end{tabular}
		\caption{An inconsistent database.} 
		\label{tab:intro-rep}
	\end{table}
\end{example}
%

%

\section{A DB View of Abstract Argumentation}

In this section, we connect inconsistent databases and abstract argumentation frameworks. 
This is established by proving that an AF~$\calF$ can be seen as an instance $\AF{D}{F}$ of inconsistent database such that the acceptable sets of arguments in $\calF$ correspond precisely to (maximal) repairs for $\calD$.
%
Intuitively, our construction relies on the fact that the attack relation between arguments in $\calF$ can be simulated via a database together with FDs, and defending arguments can be seen as another database with IDs.
Then, combining the two databases and dependencies, we obtain an AF --- seen as an inconsistent database.
For convenience, an argument ``$a$'' seen as a tuple in the database is denoted by ``$\dbtuple{a}$'' to highlight that $\dbtuple{a}$ depicts a tabular representation of $a$.

We first simulate the conflicts between arguments in an AF $\calF$ via FDs, resulting in a \emph{conflict database}.
Then, 
we establish that the idea of \emph{defending} arguments can be simulated by a \emph{defense} database and IDs.
Finally, 
we combine the two databases thereby proving that in AFs, the semantics $\sigma\in \{\conf,\naive,\adm,\pref\}$ can be simulated via (maximal) repairs for inconsistent databases with FDs and IDs.
In addition, we establish the size and the time complexity associated with the construction of the aforementioned databases. 
For simplicity, we first assume AFs without self-attacking arguments. 
At the end, we highlight the changes required to allow self-attacking arguments in AFs.

\subsection{Conflict DBs via Functional Dependencies}\label{sec:attack}
Let $\calF =(A,R)$ be an AF with a set~$A$ of arguments and attacks $R$.
Then, we construct a \emph{conflict} database $\AF{C}{F} = \langle T,F\rangle $ 
where 
each argument $a\in A$ is seen as the tuple $\dbtuple{a}$ (the unique identifier to the tuple representing the argument $a$) and a collection $F = \{f_i \mid r_i\in R\}$ of FDs.
A tuple $\dbtuple{a}\in T$ encodes the 
conflicts between $a\in A$ and other arguments in $\calF$.
Intuitively, $\calC$ is constructed in such a way that $\{\dbtuple{a},\dbtuple{b}\}\not \models f_i$ for the attack $r_i =(a,b) \in R$.

Notice that, $\{\dbtuple{a},\dbtuple{b}\} =  \{\dbtuple{b},\dbtuple{a}\}$
is true for any $\dbtuple{a},\dbtuple{b}\in T$, although the attacks $(a,b)$ and $(b,a)$ are not the same.
As a consequence, the conflict database for an AF will be \emph{symmetric} by design.
That is, 
both attacks $(a,b),(b,a)\in R$ are encoded in the conflict database in the same way by requiring that $\{\dbtuple{a},\dbtuple{b}\}\not\models f$ for some $f\in F$.
It is worth remarking that functional dependencies can only simulate the symmetric attacks.
In particular, the input AF may contain non-symmetric attacks though our translation (for the conflict database) will treat them as symmetric
ones. 
This is unproblematic in this particular setting, since only conflict-free (naive) extensions correspond to repairs (maximal repairs) with FDs. 
Further, we will expound on this situation later  that one can (perhaps) not simulate directed attacks by FDs alone, and requires IDs for that purpose.

In the following, each attack $r\in \{(a,b),(b,a)\}$ is depicted as a set 
$r\dfn \{a,b\}$.

\begin{definition}[Conflict Database]\label{db:attack}
	Let $\calF =(A,R)$ be an AF with arguments $A$ and attacks $R$.
	Then, $\AF{C}{F}\dfn \langle T,F\rangle$ defines the \emph{conflict database} for $\calF$, specified as follows.
	\begin{itemize}
		\item The attributes of $T$ are $\{x_i,n \mid r_i\in R\}$ and we call $n$ the \emph{name} attribute.
		\item $F\dfn \{\depas{x_i}{n} \mid r_i\in R\}$ is the collection of FDs and each $f\in F$ \emph{corresponds} to an attack $r\in R$.
		\item $T\dfn \{\dbtuple{a} \mid a\in A\}$. Further, 
			(1) $\dbtuple{a}[n] =a$ for each $a\in A$, (2) for $r_i=\{a,b\}$: $ \dbtuple{a}[x_i] = r_i = \dbtuple{b}[x_i]$, and 
			(2) for $\dbtuple{c}\in T$ and attributes $x_j$ not already assigned (when all the attacks have been considered): $\dbtuple{c}[x_j] =c$.
	\end{itemize}
\end{definition}
%
%

The following example presents the conflict database for the AF~$\calF =(A,R)$ from our running example (Ex.~\ref{intro:ex-AF}). 
\begin{example}\label{ex:conf-FD}
	For brevity, we rename each attack in $\calF$ as depicted in Fig.~\ref{fig:conf-FD}.
	As discussed, the attacks $(a,b),(b,a)\in R$ are modeled as one conflict $r_1\dfn \{a,b\}$.
	The conflict database for $\calF$ is $\AF{C}{F}=\langle T,F\rangle$, where $F=\{\depas{x_i}{n} \mid i\leq 5\}$ and $T$ as specified in Figure~\ref{fig:conf-FD}.
	
	\begin{figure}[t]
		\centering
		\begin{tikzpicture}[scale=.52,arg/.style={circle,draw=black,fill=white,inner sep=.75mm}]
			\node[arg] (w) at (0,1) {$a$};
			\node[arg] (x) at (3,0) {$b$};
			\node[arg] (y) at (3,2) {$c$};
			\node[arg] (z) at (6,1) {$d$};
			\foreach \f/\t/\u in {x/w/,w/y/$r_2$, y/z/$r_4$,z/y/}{
				\path[-stealth',draw=black] (\f) edge node[above] {\u} (\t) ;
			}
			\path[-stealth',draw=black] (w) edge node[below, near start] {$r_1$} (x) ;
			\path[-stealth',draw=black] (w) edge node[below, near end] {$r_5$} (z) ;
			\path[-stealth',draw=black] (x) edge node[left, near start] {$r_3$} (y) ;
			\node (table) at (11.5,1) {
				\begin{tabular}{l @{\;}| @{\;}cccccc}
					$T$ & $x_1$  & $x_2$  & $x_3$ & $x_4$  & $x_5$ & $n$ \\\hline 
					$\dbtuple{a}$ &$r_1$ &$r_2$ & $a$ & $a$ & $r_5$ & $a$ \\
					$\dbtuple{b}$ &$r_1$ & $b$ &$r_3$ & $b$ &$b$ &$b$  \\
					$\dbtuple{c}$ & $c$ & $r_2$ &$r_3$ & $r_4$ &$c$& $c$  \\				
					$\dbtuple{d}$ & $d$ & $d$ &$d$ &$r_4$ & $r_5$ & $d$ \\
			\end{tabular}};
		\end{tikzpicture}
		\caption{The AF $\calF$ (left) from Ex.~\ref{ex:conf-FD} and a conflict database (right) modeling $\calF$.}
		\label{fig:conf-FD}
	\end{figure}
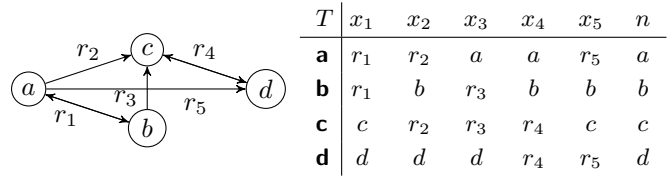
\end{example}

%
%
Let $S\subseteq A$, then by $S_T \subseteq T$ we denote the set of tuples corresponding to arguments in $S$, defined as $S_T \dfn \{\dbtuple{a} \mid \dbtuple{a}\in T, a\in S\}$.
The following theorem establishes the relation between the extensions and repairs of the conflict database.
\begin{theorem}\label{thm:conf-naive}
	Let $\calF$ be an AF without self-attacking arguments and $\AF{C}{F}$ be its corresponding conflict database. Then, for every $S\subseteq A$, $S$ is conflict-free (resp., naive) in $\calF$ iff $S_T \subseteq T$ is a repair (maximal) for $\AF{C}{F}$.
\end{theorem}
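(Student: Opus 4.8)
The plan is to reduce the whole statement to a single \emph{conflict-correspondence lemma}: for distinct arguments $a,b$, the pair $\{\dbtuple{a},\dbtuple{b}\}$ violates some FD in $F$ if and only if $\{a,b\}$ is an attack of $\calF$ (i.e.\ $(a,b)\in R$ or $(b,a)\in R$). Once this holds, the theorem follows from two standard facts: an FD is violated by a set of tuples exactly when it is violated by some pair of them, and $a\mapsto\dbtuple{a}$ is an inclusion-preserving bijection between $A$ and $T$.

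First I would prove the lemma by unwinding Definition~\ref{db:attack}. Fix the column $x_i$ attached to attack $r_i$. By construction $\dbtuple{a}[x_i]=r_i$ exactly when $a\in r_i$, and $\dbtuple{a}[x_i]=a$ otherwise, while $\dbtuple{a}[n]=a$ always. Treating the conflict labels $r_i$ as fresh values distinct from every argument name, two distinct tuples $\dbtuple{a},\dbtuple{b}$ can agree on $x_i$ only if both carry the value $r_i$, that is only if $r_i=\{a,b\}$; in that case $\dbtuple{a}[n]=a\neq b=\dbtuple{b}[n]$, so the FD $\depas{x_i}{n}$ is indeed violated. Conversely, an agreement forced by $r_i=\{a,b\}$ is the only possible source of a violation. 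Hence $\{\dbtuple{a},\dbtuple{b}\}\not\models F$ iff $a$ and $b$ are in conflict in $\calF$, which is the lemma. Note that a single tuple never violates an FD, matching the fact that ruling out self-attacks removes any diagonal obligation.

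Next I would lift this to sets. Since FDs are denial-style constraints, $S_T\models F$ holds iff every pair in $S_T$ satisfies $F$; by the lemma this means no two arguments of $S$ are in conflict, i.e.\ $(S\times S)\cap R=\emptyset$, which is exactly conflict-freeness (the diagonal costing nothing because $\calF$ has no self-attacks). This yields the first equivalence: $S$ is conflict-free iff $S_T$ is a repair for $\AF{C}{F}$. For the second, I would use that $S\mapsto S_T$ is a bijection $2^A\to 2^T$ preserving strict inclusion and that every subset of $T$ equals $S'_T$ for some $S'\subseteq A$. Therefore ``no conflict-free $S'\supset S$ exists'' translates literally into ``no repair $P\supset S_T$ exists,'' so $S$ is naive iff $S_T$ is a subset-maximal repair.

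I expect the only genuine obstacle to be the bookkeeping inside the lemma: one must make explicit that the filler assignments $\dbtuple{c}[x_j]=c$ and the conflict labels $r_i$ never coincide, so that no spurious agreement---and hence no spurious FD violation---is introduced on any column. After the lemma is secured, the remaining steps are a routine set-theoretic transfer along the bijection $a\mapsto\dbtuple{a}$.
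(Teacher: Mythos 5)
Your proposal is correct and follows essentially the same route as the paper's proof: the core observation in both is that two distinct tuples can agree on an attribute $x_i$ only when both carry the conflict label $r_i$, in which case the name attribute $n$ forces a violation of $\depas{x_i}{n}$, and the naive/maximal case then transfers along the inclusion-preserving bijection $S\mapsto S_T$. Your packaging of this as an explicit pairwise conflict-correspondence lemma (plus the remark that the filler values and the labels $r_i$ must be kept disjoint) is just a slightly more modular presentation of the argument the paper gives directly.
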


\begin{proof}
	We prove the equivalence between conflict-free extensions of $\calF$ and subset-repairs for $\AF{C}{F}$. 
	Then the claim for naive extensions and maximal repairs follows analogously.
	
	Let $S\subseteq A$ be conflict-free, then $r\not\subseteq S$ for every $r \in R$.
	Notice that, two tuples $\dbtuple{a},\dbtuple{b}\in T$ share values for some attribute $x$ only if there is an attack between the corresponding arguments (i.e., there is  some $r\in R$, with $r=\{a,b\}$).
	This implies that $\dbtuple{a}[x]\neq \dbtuple{b}[x]$ for each attribute $x$ in $T$ and distinct $\dbtuple{a},\dbtuple{b}\in S_T$.
	Consequently, $S_T\models f$ trivially for each $f\in F$ 
	and therefore $S_T$ is a repair for $\AF{C}{F}$.
	Conversely, if $S$ is not conflict-free then $r\subseteq S$ for some $r\in R$. 
	Let $r=\{a,b\}$ and $a,b\in S$ be distinct arguments (which are guaranteed since $\calF$ does not contain self-attacking arguments).
	Then there exists an attribute $x$, such that, $\dbtuple{a}[x]=r=\dbtuple{b}[x]$ but $\dbtuple{a}[n]=a\neq b = \dbtuple{b}[n]$ for $\dbtuple{a},\dbtuple{b}\in S_T$. 
	Now, $S_T$ can not be a repair for $\AF{C}{F}$ since $S_T\not\models \depas{x}{n}$. This completes the proof.
\end{proof}

Notice that the above construction fails in general when $\calF$ contains self-attacking arguments.
That is, if $ (a,a)\in R$, the only argument participating in a conflict is $a$ but the database consisting of a singleton tuple $\{\dbtuple{a}\}$ satisfies each FD trivially.
We will see that this issue can be resolved by allowing an ID since a singleton tuple can also fail IDs~(see Example~\ref{ex:intro-rep}).

%

%

\paragraph*{The sparsity issue for conflict databases.}
Observe that the construction of conflict database from an AF includes an individual attribute $x_r$ modeling $r\in R$.
This translation has the side-effect that conflict databases are very sparse in the following sense: for each attribute $x$ (except the name attribute $n$), there are exactly two tuples $\dbtuple{a},\dbtuple{b}\in T$ (corresponding to the attack $r=\{a,b\}$) having \emph{meaningful} values for $x$, whereas $\dbtuple{c}[x]=c$ for each $\dbtuple{c}\in T\setminus \{\dbtuple{a},\dbtuple{b}\}$.
This results in a very large database with most values being repeated when the number of attacks in $\calF$ is also large.
We argue that this sparsity of the conflict database can be avoided by \emph{reusing} FDs in $F$.
Intuitively, a single attribute $x$ and FD $f\in F$ can be used for two different attacks that do not share an argument. 
That is, an attribute $x$ and FD $\depas{x}{n}$ suffices to model each conflict $r_i= \{a_i,b_i\}\in R$ 
between distinct arguments $a_i\neq a_j\neq b_i\neq b_j$ for each $i\neq j$. 
This works by setting $\dbtuple{a}_i[x] {\,=\,}r_i {\,=\,}  \dbtuple{b}_i[x]$, $\dbtuple{a}_i[n] {\,=\,} a_i$ and $ \dbtuple{b}_i[n] {\,=\,} b_i$ for each $i$.
%

We prove that the problem of assigning attributes and FDs to conflicts in $R$ can be equivalently seen as the edge-coloring problem.
That is, given a graph $\calG =(V,E)$, determine whether there exists a coloring for edges in $E$ such that no two adjacent edges are assigned the same color.
The following lemma makes this connection precise, thereby allowing us to deduce that it actually suffices to use the number of FDs determined by the maximum degree of input AF.

\begin{lemma}\label{lem:few-FDs}
	Let $\calF$ be an AF of degree $\gamma$.
	Then, there is a conflict database $\AF{C}{F}\dfn \langle T, F\rangle$ for $\calF$ such that $|F|\leq \gamma{+}1$. 
\end{lemma}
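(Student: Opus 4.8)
The plan is to recast the problem of assigning attributes (equivalently, FDs) to conflicts as a proper edge-coloring problem and then invoke Vizing's theorem. First I would build the undirected \emph{conflict graph} $\calG = (A,E)$ whose vertices are the arguments and whose edges are the conflicts, i.e.\ $\{a,b\}\in E$ exactly when $(a,b)\in R$ or $(b,a)\in R$. The discussion preceding the lemma already observes that a single attribute $x$ together with the FD $\depas{x}{n}$ can simultaneously encode several conflicts, provided those conflicts are pairwise argument-disjoint: one sets $\dbtuple{a}_i[x] = r_i = \dbtuple{b}_i[x]$ with a distinct value $r_i$ per conflict $r_i=\{a_i,b_i\}$, and fills the remaining entries of column $x$ with the tuple's own name. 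The crucial point is that for this to be well defined, every argument may participate in at most one conflict assigned to $x$; that is, the set of conflicts sharing an attribute must form a \emph{matching} in $\calG$. Hence assigning attributes to conflicts so that argument-incident conflicts receive different attributes is precisely a proper edge coloring of $\calG$, where the colors are the attributes $x_1,x_2,\ldots$ and, correspondingly, the FDs in $F$.

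Next I would verify that any proper edge coloring yields a conflict database modeling $\calF$ in the sense of Theorem~\ref{thm:conf-naive}. I would introduce one attribute $x_c$ and FD $\depas{x_c}{n}$ per color $c$. Since each color class is a matching, two tuples $\dbtuple{a},\dbtuple{b}$ receive equal values in column $x_c$ iff $\{a,b\}$ is the conflict of color $c$: one assigns pairwise distinct values $r_i$ to the distinct conflicts of color $c$ and unique filler values (the tuple names) to all other tuples. Consequently $\{\dbtuple{a},\dbtuple{b}\}\not\models \depas{x_c}{n}$ holds exactly for the genuine conflict pairs, no spurious violation is introduced, and the equivalence between conflict-free sets and repairs from Theorem~\ref{thm:conf-naive} carries over verbatim to this more compact database.

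Finally, it remains to bound the number of colors. By the degree definition in the preliminaries, the number of arguments attacking or attacked by $a$ equals the number of neighbors of $a$ in $\calG$, so $\Delta(\calG) = \gamma$. Vizing's theorem guarantees a proper edge coloring of $\calG$ with at most $\Delta(\calG)+1 = \gamma+1$ colors. Translating colors back into attributes and FDs gives a conflict database $\AF{C}{F}=\langle T,F\rangle$ with $|F|\leq \gamma+1$, as claimed.

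I expect the main obstacle to be the correctness bookkeeping of the reused-column construction in the second step, rather than the coloring argument itself: one must ensure that sharing a column neither merges two genuinely distinct conflicts nor forces two non-conflicting tuples to agree somewhere. This reduces to the observation that within each matching color class the conflict values $r_i$ can be chosen globally fresh while all filler values are the pairwise-distinct tuple names, so the only coincidences in any column $x_c$ are the intended conflict pairs.
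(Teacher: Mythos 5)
Your proposal is correct and follows essentially the same route as the paper's proof: reduce attribute/FD reuse to proper edge-coloring of the (symmetrized) conflict graph, check that sharing a column across a matching of argument-disjoint conflicts preserves the repair/conflict-free correspondence of Theorem~\ref{thm:conf-naive}, and conclude via Vizing's theorem that $\gamma+1$ attributes (hence FDs) suffice. The only cosmetic difference is that the paper phrases the correctness step as a local claim about merging two disjoint conflicts into one fresh attribute, whereas you argue directly over whole color classes; the content is the same.
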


\begin{proof}
Let~$\calF=(A,R)$ be an AF with arguments $A$ and conflicts $R$. 
%
Assume $a_i, a_j$ and $b_i, b_j$ be distinct arguments such that $r_i\dfn \{a_i,b_i\}\in R$ and $r_j\dfn \{a_j,b_j\}\in R$.
Moreover, suppose $f_i\dfn x_i\rightarrow n$ and $f_j\dfn x_j\rightarrow n$ denote the FDs encoding the conflicts $r_i$ and $r_j$.
Let $x_k$ be a fresh attribute, then we denote by $\AF{C}{F}'=\langle T',F'\rangle$ the database instance after replacing attributes $x_i,x_j$ (resp., FDs $f_i,f_j$ in $F$) by $x_{k}$ ($f_{k}$).
Further, in $\AF{C}{F}'$ we set $\dbtuple{a}_i[x_{k}]=r_i=\dbtuple{b}_i[x_{k}]$ and $\dbtuple{a}_j[x_{k}]=r_j=\dbtuple{b}_j[x_{k}]$, whereas the remaining values remain unchanged.
Then, the following claim depicts the relationship between each $P\subseteq T$ in $\AF{C}{F}$ versus in $\AF{C}{F}'$.
\begin{claim}\label{claim:sat}
	$P\models \{f_i,f_j\}$ in $\AF{C}{F}$ iff $P\models f_k$ in $\AF{C}{F}'$.
\end{claim}
\noindent\textbf{Proof of Claim.}
Observe that the attribute $x_k$ in $\AF{C}{F}'$ and $x_i,x_j$ in $\AF{C}{F}$ for tuples $\{\dbtuple{a}_i,\dbtuple{b}_i,\dbtuple{a}_j,\dbtuple{b}_j\}$ include the only changes between $\AF{C}{F}'$ and $\AF{C}{F}$.
Then, 
$P\models \{f_i,f_j\}$ iff $P$ contains at most one tuple from each set $\{\dbtuple{a}_i,\dbtuple{b}_i\}$ and $\{\dbtuple{a}_j,\dbtuple{b}_j\}$.
Moreover, the attribute $x_k$ is assigned value $r_i$ by $\{\dbtuple{a}_i,\dbtuple{b}_i\} $ and $r_j$ by $\{\dbtuple{a}_j,\dbtuple{b}_j\}$.
As a result, $P\models f_k$ in $\AF{C}{F}'$ since other attributes in $\AF{C}{F}'$ and $\AF{C}{F}$ remain the same. 
Conversely, 
if $P\models f_k$, then once again $P$ contains at most one tuple from each set $\{\dbtuple{a}_i,\dbtuple{b}_i\}$ and $\{\dbtuple{a}_j,\dbtuple{b}_j\}$. Further, $\dbtuple{a}_i[x_j]=a_i\neq \dbtuple{b}_i[x_j]=b_i$ and the same is true for tuples $\{\dbtuple{a}_j,\dbtuple{b}_j\}$ and attribute $x_i$.
This implies that $P\models \{f_i,f_j\}$ and establishes the claim. \qedclaim

Now, we are ready to prove that the task of assigning attributes and FDs to conflicts in $R$ can be equivalently seen as the edge-coloring problem for $\calF$.
Indeed, given a valid edge coloring, we use an attribute $x_c$ for each color $c$ and FD  $x_c\rightarrow n$. 
In other words, distinct colors represent attributes and a valid coloring implies that no two edges connected to a vertex share the same attribute.
This results in a well-defined 
assignment of attribute values via tuples by setting $\dbtuple{a}[x_c]=r_i = \dbtuple{b}[x_c]$ for every conflict $r_i=\{a,b\}$ with color $c$ in $\calF$.
Further, observe that each FD in $\AF{C}{F}$ can be written as $\depas{x_c}{n}$ for some color $c$ since every edge is colored.
Finally, as  Claim~\ref{claim:sat} proves, the change of attribute values and FDs concerning only distinct conflicts (not sharing any end point) preserves the satisfaction of FDs in every subset $P\subseteq T$. 
Consequently, repairs for $\AF{C}{F}$ and $\AF{C}{F}'$ coincide where $\AF{C}{F}'$ is the conflict database in which FDs and attributes are assigned by an edge-coloring of $\calF$.

The equivalence between two problems (assigning attributes/FDs to conflicts in $\calF$ and edge-coloring $\calF$) allows us to utilize the well-known Vizing's theorem~\cite{vizing1964estimate}, stating that the number of colors needed to edge-color a graph is at most $\gamma+1$ for a graph  of degree $\gamma$.
\end{proof}
The proof of Lemma~\ref{lem:few-FDs} also allows us to deduce that the conflict database constructed with fewer FDs still obeys the equivalence between (naive) conflict-free extensions in $\calF$ and (maximal) repairs for $\AF{C}{F}$~(Theorem~\ref{thm:conf-naive}). 

\begin{example}\label{main-ex:conf-FD}
Figure~\ref{main-fig:conf-FD} depicts a compact representation of the conflict database from our running  example. 
Clearly, the colors for the edge-coloring of $\calF$ result in three attributes $\{x_b,x_r,x_g\}$ and FDs $\depas{x_i}{n}$ for $i\in \{b,r,g\}$.
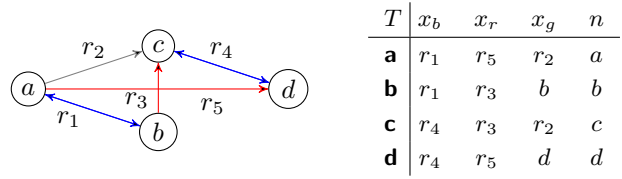
\begin{figure}[t]
	\centering
	\begin{tikzpicture}[scale=.57,arg/.style={circle,draw=black,fill=white,inner sep=.75mm}]
		\node[arg] (w) at (0,1) {$a$};
		\node[arg] (x) at (3,0) {$b$};
		\node[arg] (y) at (3,2) {$c$};
		\node[arg] (z) at (6,1) {$d$};
		\foreach \f/\t/\u/\c  in {x/w//blue/,w/y/$r_2$/gray, y/z/$r_4$/blue,z/y//blue}{
			\path[-stealth',draw=\c] (\f) edge node[above] {\u} (\t) ;
		}
		\path[-stealth',draw=blue] (w) edge node[below, near start] {$r_1$} (x) ;
		\path[-stealth',draw=red] (w) edge node[below, near end] {$r_5$} (z) ;
		\path[-stealth',draw=red] (x) edge node[left, near start] {$r_3$} (y) ;
		\node (table) at (10.75,1) {
			\begin{tabular}{l @{\;}| @{\;}cccc}
				$T$ & $x_b$ & $x_r$ & $x_g$ & $n$ \\\hline 
				$\dbtuple{a}$ &$r_1$ & $r_5$ &$r_2$ & $a$  \\
				$\dbtuple{b}$ &$r_1$ & $r_3$ &$b$ & $b$  \\
				$\dbtuple{c}$ & $r_4$ & $r_3$ &$r_2$ & $c$ \\
				$\dbtuple{d}$ & $r_4$ & $r_5$ &$d$ &$d$ \\
		\end{tabular}};%
	\end{tikzpicture}
	\caption{The edge-colored AF $\calF$ (left) from Example~\ref{main-ex:conf-FD} and a conflict database $\AF{C}{F}$ with attributes assignment corresponding to the colors (right) for conflicts in $\calF$.}\label{main-fig:conf-FD}
\end{figure}
\end{example}

As a consequence of Lemma~\ref{lem:few-FDs}, the following claim holds. 

\begin{theorem}\label{thm:conf-naive-time}
Let $\calF$ be an AF without self-attacking arguments.
Then, its corresponding conflict database $\AF{C}{F}$ can be constructed in polynomial time in the size of $\calF$.
Additionally, $\AF{C}{F}$ has a table of size $|A|\times (|A|+1)$, and uses $|A|$-many FDs at most, for an AF with $|A|$ arguments. 
\end{theorem}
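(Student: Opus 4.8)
The plan is to combine the explicit construction of Definition~\ref{db:attack} with the FD-economy already established in Lemma~\ref{lem:few-FDs}, and then simply translate the resulting bounds on the number of attributes and FDs into bounds phrased in terms of $|A|$, while checking that every step runs in polynomial time. So the work splits cleanly into three parts: bounding the FDs, bounding the table dimensions, and certifying the running time.

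First I would count the FDs. Lemma~\ref{lem:few-FDs} already guarantees a conflict database $\AF{C}{F}$ using at most $\gamma+1$ FDs, where $\gamma$ is the degree of $\calF$. Since $\calF$ has no self-attacking arguments, any single argument can be involved (as attacker or attackee) with at most the other $|A|-1$ arguments, so $\gamma \leq |A|-1$ and hence $|F| \leq \gamma+1 \leq |A|$. This yields the claimed bound on the number of FDs.

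Next I would bound the table size. The table $T$ has exactly one tuple $\dbtuple{a}$ per argument $a\in A$, so it has $|A|$ rows. Its columns are the FD-attributes (one per color of the edge-coloring, hence at most $\gamma+1 \leq |A|$ of them) together with the single name attribute $n$, giving at most $|A|+1$ columns. Thus $T$ has size at most $|A|\times(|A|+1)$, as claimed.

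Finally, I would verify polynomiality. Building the symmetric conflict graph underlying $\calF$ is linear in $|R|$. The only non-trivial step is computing the edge-coloring on which Lemma~\ref{lem:few-FDs} rests: here I would invoke the fact that Vizing's theorem is \emph{constructive}, i.e. a $(\gamma{+}1)$-edge-coloring can be produced by a polynomial-time algorithm (e.g.\ the Misra--Gries procedure). Given the coloring, assigning each attribute value $\dbtuple{a}[x_c]$ and emitting the FDs $\depas{x_c}{n}$ exactly as in Definition~\ref{db:attack} takes time polynomial in $|A|+|R|$. The main (and essentially only) point requiring care is precisely this algorithmic upgrade of Lemma~\ref{lem:few-FDs}: that lemma asserts only the \emph{existence} of a small edge-coloring via Vizing's theorem, whereas the present time bound needs a polynomial-time procedure actually attaining the $\gamma+1$ colors. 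Once that is in place, the three bounds above combine to give the statement.
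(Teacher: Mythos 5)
Your proposal is correct and follows essentially the same route as the paper's proof: one row per argument, at most $\gamma+1\leq|A|$ attribute columns plus the name attribute via Lemma~\ref{lem:few-FDs}, one FD per attribute, and polynomial-time construction by invoking a constructive $(\gamma{+}1)$-edge-coloring algorithm (Misra--Gries). The point you flag as needing care --- upgrading the existential Vizing bound to an algorithmic one --- is exactly the step the paper also handles by citing that algorithm.
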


\begin{proof}
	Let $\calF =(A,R)$ be an AF and $\AF{C}{F}=\langle T,F\rangle$ be the corresponding conflict database.
	Recall that $T$ has exactly one tuple ${\dbtuple{a}}$ for each argument $a\in A$.
	Therefore, the number of rows (tuples) in $T$ is the same as $|A|$.
	Moreover, due to Lemma~\ref{lem:few-FDs}, it is easy to observe that the number of columns (attributes) in $\AF{C}{F}$ is also bounded by $|A|$ since $\gamma< |A|$ (and there are no self-attacks in $R$) where $\gamma$ is the maximum degree of any argument in $\calF$.
	As a result, the table in the database $\AF{C}{F}$ has at most $m$ rows and $m+1$ columns (including the name attribute ``$n$''), and therefore has the size $|A|\times (|A|+1)$.
	Moreover, there is one FD ($\depas{x_i}{n}$) corresponding to each attribute $x_i$, which proves the claim regarding the size of FDs.
	Finally, using the equivalence of assigning attributes and FDs to conflicts and the problem of edge-coloring $\calF$ (the proof of Lemma~\ref{lem:few-FDs}), one can use any polynomial time algorithm (such as, the one by~\cite{MISRAG}) for assigning the conflicts in $\calF$ to $(\gamma+1)$-many attributes, where $\gamma$ is the maximum degree of $\calF$.
	This completes the proof to our theorem.
\end{proof}

\subsection{Defense DBs via Inclusion Dependencies}\label{sec:defend}

Assume an AF $\calF =(A,R)$ with arguments $A$ and attacks $R$.
We construct a \emph{defense} database $\AF{D}{F} = \langle T, I\rangle$ for $\calF$.
Intuitively, the defense database $T $ contains the information about incoming and outgoing attacks for each argument $a\in A$. 
The attributes of $T$ are $\{u_a,v_a \mid a\in A\}$, further $T\dfn \{\dbtuple{a}\mid a\in A\}$ and the tuple $\dbtuple{a}\in T$ encodes the neighborhood of the argument $a\in A$.
Formally, $T$ is defined as follows.

\begin{definition}[Defense database]\label{db:defense}
Let $\calF =(A,R)$ be an AF with arguments $A$ and attacks $R$.
Then $\AF{D}{F}\dfn \langle T,I\rangle$ defines the \emph{defense} database for $\calF$, specified in the following.
\begin{itemize}
\item The attributes of $T$ are $\{u_a,v_a \mid a\in A\}$.
\item $I\dfn \{\incas{u_a}{v_a} \mid a\in A\}$ is the collection of IDs.
\item $T\dfn \{\dbtuple{a}\mid a\in A\}$, where
	(1) for each $r = (a,b)\in R$: 
	\begin{align*}
		& \dbtuple{a}[u_b]= b, & \dbtuple{a}[v_b]= b, \\
		& \dbtuple{b}[u_a] = a, & \dbtuple{b}[v_a] = 0.
	\end{align*}
	(2) for each $\dbtuple{t} \in T$ and variables $u_d, v_d$ not yet assigned: 
	\begin{align*}
		& \dbtuple{t} [u_d] = 0,  &  \dbtuple{t}[v_d]= 0.
	\end{align*} 
\end{itemize}
\end{definition}
Intuitively, for each $\dbtuple{a}\in T$: 
$\dbtuple{a}[v_b]$ encodes whether the argument $a$ attacks $b$, by setting $\dbtuple{a}[v_b]=b$ when $(a,b)\in R$.
Moreover, $\dbtuple{a}[u_b]$ aims at encoding whether $a$ \emph{interacts} with (either attacks, or attacked by) $b$. 
This is  
achieved by setting $\dbtuple{a}[u_b]=b$ if $(b,a)\in R$ or $(a,b)\in R$.
The intuition is to simulate attacks from an argument $a$ via an ID $\incas{u_a}{v_a}$.
The idea is that if an argument $b$ is attacked by $a$, then $b$ interacts with $a$ and hence $\dbtuple{b}[u_a]=a$.
Now, there are two ways $b$ can be defended  against the attack by $a$: either $b$ defends itself by attacking $a$ (in which case $\dbtuple{b}[v_b]=a$, hence $\{\dbtuple{b}\}\models\incas{u_a}{v_a}$), or there is an argument $c\not\in\{a,b\}$ attacking $a$ (therefore $\dbtuple{c}[v_a]=a$, and $\{\dbtuple{b},\dbtuple{c}\}\models\incas{u_a}{v_a}$ as $\dbtuple{c}[u_a]=\dbtuple{c}[v_a]=a$).

The following example demonstrates the defense database for the AF~$\calF =(A,R)$ from our running example (Ex.~\ref{intro:ex-AF}). 
\begin{example}\label{ex:def-ID}
The defense database for $\calF$ from Example~\ref{intro:ex-AF} is $\AF{D}{F}=\langle T,I\rangle$, where $I=\{\incas{u_s}{v_s} \mid s\in A\}$ and $T$ is specified in Table~\ref{fig:def-ID}.

\begin{table}
\centering
\begin{tabular}{l @{\;}| @{\;}cccccccc}
	$T$ & $u_a$ & $v_a$ & $u_b$  & $v_b$ & $u_c$ & $v_c$ & $u_d$ & $v_d$\\
	\hline 
	$\dbtuple{a}$ &$0$ & $0$ &$b$ & $b$ & $c$ & $c$ & $d$ & $d$ \\
	$\dbtuple{b}$ &$a$ & $a$ &$0$ & $0$ &$c$ &$c$ & $0$ & $0$   \\
	$\dbtuple{c}$ & $a$ & $0$ &$b$ & $0$ &$0$& $0$ & $d$& $d$ \\
	$\dbtuple{d}$ & $a$ & $0$ &$0$ &$0$ & $c$ & $c$ & $0$&  $0$   \\
\end{tabular}
\caption{The defense database for the AF $\calF$ in Example~\ref{ex:def-ID}.}
\label{fig:def-ID}
\end{table}

\end{example}
Observe that the active domain of $T$ consists of the arguments in $\calF$, as well as, an auxiliary element $0$ to serve the purpose of missing values in the database (i.e., when a value is missing, we prefer writing $0$ rather than leaving it blank).
Moreover, repeating the argument names instead of $0$ as for the case of the conflict database has an undesired effect.
Consider Example~\ref{ex:def-ID}, if we set $\dbtuple{a}[u_a]=\dbtuple{a}[v_a]=a$ instead of $0$, then $\dbtuple{a}\in \support{i}{\dbtuple{d}}$ and $\dbtuple{a}\in \support{i}{\dbtuple{c}}$ where $i=\incas{u_a}{v_a}$. 
However, the argument $a$ does not defend $c$ or $d$ in $\calF$ against the attack by $a\in A$.
Indeed, we aim at proving that the support relation between tuples in DBs for IDs is essentially the same as the defense relation between arguments.
This connection is clarified in the following lemma.

\begin{lemma}\label{lem:def-sup}
Let $a,b\in A$ be two arguments such that $(a,b)\in R$.
Then, for $\dbtuple{b}\in T$ (the tuple corresponding to $b$) and $i\dfn \incas{u_a}{v_a}\in I$, we have $\defend{a}{b} = \support{i}{\dbtuple{b}}$. 
\end{lemma}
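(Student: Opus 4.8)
The plan is to prove the set equality by unfolding both sides directly from the definitions and matching them under the identification of an argument $c \in A$ with its tuple $\dbtuple{c} \in T$. First I would rewrite the left-hand side: since $(a,b) \in R$, the definition of the defense set gives $\defend{a}{b} = \{c \mid (c,a) \in R\}$, i.e.\ exactly the arguments attacking $a$. For the right-hand side, I would apply the definition of support for inclusion dependencies to $i \dfn \incas{u_a}{v_a}$, obtaining $\support{i}{\dbtuple{b}} = \{\dbtuple{t} \mid \dbtuple{t}[v_a] = \dbtuple{b}[u_a]\}$.

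The key computation is to evaluate $\dbtuple{b}[u_a]$ using Definition~\ref{db:defense}. Since $(a,b) \in R$, clause~(1) of the construction (instantiated at the attack $(a,b)$) sets $\dbtuple{b}[u_a] = a$; I would also note that this value is unambiguous even if $(b,a) \in R$ as well, because both orientations write the value $a$ into $\dbtuple{b}[u_a]$. Substituting, $\support{i}{\dbtuple{b}} = \{\dbtuple{t} \mid \dbtuple{t}[v_a] = a\}$, so the remaining task is to identify precisely the tuples whose $v_a$-entry equals $a$.

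The heart of the argument---and the step requiring the most care---is the characterization of the $v_a$ column. I would inspect the two places the construction writes into a $v$-attribute: clause~(1), instantiated at an arbitrary attack $(c,d)$, sets $\dbtuple{c}[v_d] = d$ and $\dbtuple{d}[v_c] = 0$, while clause~(2) fills every remaining entry with $0$. Hence $\dbtuple{c}[v_a] = a$ occurs exactly when there is an attack $(c,a) \in R$, whereas every other tuple receives $v_a = 0$. To rule out spurious matches I would use that the auxiliary symbol $0$ is distinct from every argument (so the default and the $\dbtuple{d}[v_c] = 0$ assignments never equal $a$) and that $\calF$ has no self-attack (so $\dbtuple{a}[v_a] = 0$ rather than $a$). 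This yields $\{\dbtuple{t} \mid \dbtuple{t}[v_a] = a\} = \{\dbtuple{c} \mid (c,a) \in R\}$.

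Combining the two computations gives $\defend{a}{b} = \{c \mid (c,a) \in R\} = \{\dbtuple{c} \mid (c,a) \in R\} = \support{i}{\dbtuple{b}}$ under the argument--tuple identification, which completes the proof. I expect the only genuine obstacle to be the bookkeeping in the third paragraph, namely confirming that no assignment other than an attack on $a$ ever writes $a$ into the $v_a$ column; everything else is a direct substitution of definitions.
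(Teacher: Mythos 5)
Your proposal is correct and follows essentially the same route as the paper's proof: both unfold the definitions, compute $\dbtuple{b}[u_a]=a$ from the construction, and then characterize membership in $\support{i}{\dbtuple{b}}$ as having $v_a$-value $a$, which happens exactly for the tuples of arguments attacking $a$. Your version is merely more explicit about the bookkeeping for the $v_a$ column (the role of the symbol $0$ and the no-self-attack assumption), which the paper leaves implicit.
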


\begin{proof}
	Let $(a,b)\in R$, then we have $\dbtuple{b}[u_a]=a$.
	So, $\support{i}{\dbtuple{b}}\dfn\{\dbtuple{c} \mid \dbtuple{c}[v_a] = \dbtuple{b}[u_a]\}$ includes all the tuples supporting $\dbtuple{b}$ for the ID $i$. 
	Clearly, for each tuple $\dbtuple{c}\in T$ such that $\dbtuple{c}\in \support{i}{\dbtuple{b}}$, we have that $(c,a)\in R$ since $\dbtuple{c}[v_a]=a$.
	Similarly, for each $c\in A$ such that $c\in \defend{a}{b}$, we know that $(c,a)\in R$. 
	As a result, $\dbtuple{b}[u_a]=a$ together with $\dbtuple{c}[v_a]=a$ implies that $\dbtuple{c}\in \support{i}{\dbtuple{b}}$. 
	This completes the proof. 
\end{proof}


Let $S\subseteq A$ be a set of arguments and $S_T\subseteq T$ be the corresponding set of tuples.
The defense database has the property that $S_T\models \incas{u_a}{v_a}$ iff each argument in $S $ is defended against the argument $a\in A$.
%
We prove that a set $S\subseteq A$ defends itself in $\calF$ iff  $S_T \subseteq T$ satisfies every ID in $I$.

\begin{lemma}\label{lem:defend-ids}
Let $\calF$ be an AF and $\AF{D}{F}$ denotes its corresponding defense database. Then, for every $S\subseteq A$, $S$ defends itself in $\calF$ iff $S_T \subseteq T$ is a repair for $\AF{D}{F}$.
\end{lemma}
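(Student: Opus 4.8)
The plan is to reduce the statement, argument by argument, to a per-ID analysis and then to invoke Lemma~\ref{lem:def-sup} for the substantive case. Recall that $S_T$ is a repair for $\AF{D}{F}$ exactly when $S_T\models i_a$ for every $a\in A$, where $i_a\dfn\incas{u_a}{v_a}$, and that ``$S$ defends itself'' means every $s\in S$ is defended by $S$ against each of its attackers. The key preparatory observation I would record is the shape of the entries produced by Definition~\ref{db:defense}: for $\dbtuple{s}\in T$ we have $\dbtuple{s}[v_a]=a$ iff $(s,a)\in R$, and $\dbtuple{s}[u_a]=a$ iff $s$ interacts with $a$ (that is, $(s,a)\in R$ or $(a,s)\in R$), every remaining entry being $0$. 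In particular $\dbtuple{s}[u_a]=0$ forces $\dbtuple{s}[v_a]=0$, since non-interaction precludes an attack on $a$.

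For the forward direction I would assume $S$ defends itself, fix $a\in A$ and $\dbtuple{s}\in S_T$, and verify that $\dbtuple{s}$ is supported in $S_T$ for $i_a$ by a three-way split on $\dbtuple{s}[u_a]$: (i) if $\dbtuple{s}[u_a]=0$, then $\dbtuple{s}[v_a]=0$ and $\dbtuple{s}$ supports itself; (ii) if $\dbtuple{s}[u_a]=a$ with $(s,a)\in R$, then $\dbtuple{s}[v_a]=a$ and again $\dbtuple{s}$ supports itself; (iii) if $\dbtuple{s}[u_a]=a$ with $(a,s)\in R$, then $s$ is attacked by $a$, so by the defence assumption there is some $c\in S\cap\defend{a}{s}$, and Lemma~\ref{lem:def-sup} (applied to the attack $(a,s)$) gives $\dbtuple{c}\in\support{i_a}{\dbtuple{s}}$. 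Hence $S_T\models i_a$ for every $a\in A$, so $S_T$ is a repair.

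For the converse I would assume $S_T$ is a repair and take an arbitrary $s\in S$ together with an attacker $(b,s)\in R$. Then $\dbtuple{s}[u_b]=b\neq 0$, so $S_T\models i_b$ supplies a tuple $\dbtuple{t}\in S_T$ with $\dbtuple{t}[v_b]=b$, i.e.\ $\dbtuple{t}\in\support{i_b}{\dbtuple{s}}$. By Lemma~\ref{lem:def-sup} this set equals $\defend{b}{s}$, so $t\in S$ defends $s$ against $b$. As $s$ and its attacker $b$ were arbitrary, $S$ defends itself, completing the equivalence.

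The main obstacle is not the genuine defence case, which Lemma~\ref{lem:def-sup} settles directly, but ensuring the auxiliary $0$-entries never create a spurious ID violation in the forward direction. The delicate point is cases (i)--(ii): I must confirm that every tuple whose $u_a$-value is $0$, and every tuple that itself attacks $a$, is self-supporting, so that such tuples impose no demand beyond the defence condition. This is precisely where the design choice $\dbtuple{b}[v_a]=0$ (rather than repeating the argument name), flagged in the discussion preceding the lemma, is essential: it keeps ``$\dbtuple{t}[v_a]=a$'' synonymous with ``$t$ attacks $a$'', so that the support relation for $i_a$ faithfully mirrors the defence relation and no extraneous supports or violations arise.
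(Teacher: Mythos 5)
Your proof is correct and follows essentially the same route as the paper's: both reduce the claim to a per-ID check and invoke Lemma~\ref{lem:def-sup} to identify $\support{i_a}{\dbtuple{s}}$ with $\defend{a}{s}$. The only differences are presentational --- your forward direction explicitly verifies that tuples whose $u_a$-value is $0$, or that themselves attack $a$, are self-supporting (a point the paper's proof leaves implicit), and your converse is direct where the paper argues by contraposition.
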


\begin{proof}
	Let $S\subseteq A$ such that $S$ defends itself in $\calF$. 
	Then, for each $a\in A$ such that $(a,b)\in R$ for some $b\in S$, there is $c\in S$ such that $(c,a)\in R$.
	As a result, $\defend{a}{b}\cap S \neq \emptyset$.
	This implies that, for each $i\dfn \incas{u_a}{v_a}\in I$, $\support{i}{\dbtuple{b}}\cap S_T\neq \emptyset$ (due to  Lemma~\ref{lem:def-sup}), and  
	consequently, $S_T\models i$.
	Conversely, suppose there is some $c\in S$ such that $(d,c)\in R$ for some $d\in A$ but no $c'\in S$ attacks $d$.
	Now, $\defend{d}{c} \cap S = \emptyset$ and therefore, $\support{j}{\dbtuple{c}} \cap S_T = \emptyset$ where $j=\incas{u_d}{v_d}$ (again Lem.~\ref{lem:def-sup}).
	Then, $S_T\not\models \incas{u_d}{v_d}$; therefore $S_T$ can not be a repair in~$\AF{D}{F}$. 
\end{proof}

Note that repairs for a defense database do not yield conflict-free sets. 
In fact, we will prove later that 
one can not model conflict-freeness via inclusion dependencies alone.

\begin{example}\label{ex:def-rep}
Consider the AF~$\calF$ and its defense database $\AF{D}{F}$ from Example~\ref{ex:def-ID}. 
Then, $\{\dbtuple{a}\},\{\dbtuple{b}\}\in \repairs(\AF{D}{F})$ and each set defends itself whereas $\{\dbtuple{c}\},\{\dbtuple{d}\}\not\in \repairs(\AF{D}{F})$ and do not defend themselves. 
Further, $\{\dbtuple{a},\dbtuple{b},\dbtuple{c}\}\in \repairs(\AF{D}{F})$ and defends itself, although it is not conflict-free in $\calF$.
\end{example}


The following theorem establishes the size and the time bounds to construct a of defense database for a given AF.

\begin{theorem}\label{thm:def-time}
Let $\calF$ be an AF without self-attacking arguments.
Then, its corresponding defense database $\AF{D}{F}$ can be constructed in polynomial time in the size of $\calF$. 
Additionally, $\AF{D}{F}$ has a table of size $|A|\times 2|A|$ and uses $|A|$-many IDS, for an AF with $|A|$ arguments. 
\end{theorem}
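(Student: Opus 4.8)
The plan is to read the three claimed bounds directly off Definition~\ref{db:defense} and then check that the fill-in procedure it prescribes runs in polynomial time. Unlike the conflict database, where Vizing's theorem was needed to compress the FDs, there is no optimization step here, so the argument is essentially careful counting plus a routine timing analysis.

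First I would establish the table dimensions. By construction $T \dfn \{\dbtuple{a} \mid a\in A\}$ contains exactly one tuple per argument, so $T$ has $|A|$ rows. The attribute set is $\{u_a, v_a \mid a\in A\}$, contributing two attributes per argument and hence $2|A|$ columns; this gives the claimed size $|A|\times 2|A|$. Likewise $I \dfn \{\incas{u_a}{v_a}\mid a\in A\}$ contains one ID per argument, so $|I| = |A|$, matching the stated bound on the number of inclusion dependencies.

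Next I would bound the construction time. I would first initialize every cell of the $|A|\times 2|A|$ table to the default value $0$ (clause~(2) of the definition), which costs $O(|A|^2)$ steps. Then, iterating over the attacks $R$, for each $(a,b)\in R$ I would overwrite the four cells $\dbtuple{a}[u_b]$, $\dbtuple{a}[v_b]$, $\dbtuple{b}[u_a]$, $\dbtuple{b}[v_a]$ as prescribed in clause~(1); since each attack touches a constant number of cells and $|R|\le |A|^2$, this pass costs $O(|A|^2)$. Emitting the $|A|$ inclusion dependencies takes a further $O(|A|)$ time. The total is therefore polynomial, indeed $O(|A|^2)$, in the size of $\calF$.

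The main point to be careful about, rather than a genuine obstacle, is to confirm that clauses~(1) and~(2) of Definition~\ref{db:defense} do not conflict: each attack-induced assignment in clause~(1) overrides a default $0$ exactly once, and the absence of self-attacks guarantees $a\neq b$ in every such assignment, so the initialize-then-overwrite order yields a well-defined table. I would also remark that, in contrast to the conflict database, no reuse of IDs is possible: each $\incas{u_a}{v_a}$ is tied to a specific argument $a$ through its dedicated attribute pair, so the $|A|$ bound on IDs is tight and no edge-coloring argument applies.
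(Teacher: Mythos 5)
Your proposal is correct and follows essentially the same route as the paper: read the row count, column count, and ID count directly off Definition~\ref{db:defense}, then observe that filling in the table only requires scanning the attacks once, giving an $O(|A|^2)$ bound. The paper's own proof is terser on the runtime (it just notes that one only needs to determine the incoming and outgoing attacks of each argument), so your explicit initialize-then-overwrite accounting and the remark on well-definedness are harmless elaborations rather than a different argument.
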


\begin{proof}
	Let $\calF =(A,R)$ be an AF and $\AF{D}{F}=\langle T,I\rangle$ be the corresponding defense database.
	As before, $T$ has exactly one tuple ${\dbtuple{a}}$ for each argument $a\in A$.
	Therefore, the number of rows (tuples) in $T$ is the same as $|A|$.
	Moreover, $T$ has $2m$ columns (attributes) where $m=|A|$ (follows from Def.~\ref{db:defense}). 
	As a result, the table in the database $\AF{D}{F}$ has $m$ rows and $2m$ columns, and therefore has the size $|A|\times 2|A|$.
	Further, there is one ID ($\incas{u_a}{v_a}$) corresponding to each $a\in A$, which proves the claim for the size of IDs.
	This establishes the mentioned bounds on the size of the defense database.
	Finally, the claim regarding the runtime follows easily, since one only needs to determine all the incoming and outgoing attacks for each argument $a\in A$.
\end{proof}

\subsection{Inconsistent Databases for AFs}\label{sec:AFs}
We combine the (conflict and defense) databases from the previous two subsections and establish that a collection of FDs and IDs suffices to encode the entire AF.
Let $\calF =(A,R)$ be an AF.
We construct an \emph{AF-database} as the instance $\AF{A}{F} =\langle T, D\rangle$, where $T$ is the database obtained by \emph{combining} the conflict and defense databases, and $D=F\cup I$ consists of a collection of FDs and IDs.
Specifically, $\AF{A}{F}$ has the following components.
\begin{itemize}
	%
	\item $F \dfn \{\depas{x_i}{n} \mid r_i\in R\}$ and $I\dfn \{\incas{u_a}{v_a} \mid a\in A\}$.
	\item $T\dfn \{\dbtuple{a}\mid a\in A\}$ is a database over attributes $\{x_i \mid r_i\in R\} \cup \{n\}\cup \{u_a,v_a \mid a\in A\}$. The tuples of $T$ are specified as before. That is, (1) for each $\dbtuple{a}\in T$, $\dbtuple{a}[n] = a$,
	(2) for each $r_i = (a,b)\in R$: \quad $\dbtuple{a}[x_i]=r_i,\quad \dbtuple{b}[x_i] = r_i,$ 
	\begin{align*}
		& \dbtuple{a}[u_b] = b, 
		& \dbtuple{a}[v_b] = b, \\
		& \dbtuple{b}[u_a]= a, 
		& \dbtuple{b}[v_a]= 0,
	\end{align*} 
	and (3) for each $\dbtuple{c} \in T$ and variables $x_i,u_d, v_d $ not already assigned: $\dbtuple{c}[x_i]  = c$,  $\dbtuple{c} [u_d] = 0$, and $\dbtuple{c}[v_d]= 0$.
	
\end{itemize}
Although for better presentation we used an attribute and FD for each $r\in R$, Lemma~\ref{lem:few-FDs} is still applicable and allows us to utilize $\gamma+1$ many FDs when $\calF$ has degree $\gamma$. 
Further, the encoding of conflicts via FDs is still symmetric, i.e., an attack of the form $r_i = \{a,b\}$ is considered for assigning attribute values for $x_i$'s.
The following example demonstrates the inconsistent database 
from our running example (Example~\ref{intro:ex-AF}). 

\begin{example}\label{ex:AF-both}
	The AF-database for $\calF$ is $\AF{A}{F}=\langle T,D\rangle$, where $D=F\cup I$ with $F=\{\depas{x_i}{n} \mid i\leq 3\}$ and $I=\{\incas{u_s}{v_s} \mid s\in A\}$, and $T$ is specified in Table~\ref{tab:AF-both}.
	\begin{table}[t]
		\centering

		\resizebox{.75\textwidth}{!}{
			\begin{tabular}{l|cccc|cccccccc}
				$T$ & $x_1$ & $x_2$ & $x_3$ & $n$ & $u_a$ & $v_a$ & $u_b$  & $v_b$ & $u_c$ & $v_c$ & $u_d$ & $v_d$\\
				\hline 
				$\dbtuple{a}$ &$r_1$ & $r_5$ &$r_2$ & $a$
				&$0$ & $0$ &$b$ & $b$ & $c$ & $c$ & $d$ & $d$\\
				$\dbtuple{b}$ &$r_1$ & $r_3$ &$b$ & $b$ 
				&$a$ & $a$ &$0$ & $0$ &$c$ &$c$ & $0$ & $0$\\
				$\dbtuple{c}$ & $r_4$ & $r_3$ &$r_2$ & $c$
				& $a$ & $0$ &$b$ & $0$ &$0$& $0$ & $d$& $d$\\
				$\dbtuple{d}$ & $r_4$ & $r_5$ &$d$ &$d$ 
				& $a$ & $0$ &$0$ &$0$ & $c$ & $c$ & $0$&  $0$ 
			\end{tabular}
		}
		\caption{The inconsistent database for the AF $\calF$ in Example~\ref{ex:AF-both}.}
		\label{tab:AF-both}
	\end{table}
\end{example}

\paragraph*{Self-attacking arguments.} 
A self-attacking argument can not belong to any extension. 
We add a single ID $i_s\dfn \incas{u_{s}}{v_s}$ over fresh attributes $\{u_s,v_s\}$ to encode  self-attacking arguments.
To this aim, for each $a\in A$ such that $(a,a)\in R$: we set $\dbtuple{a}[u_s]=a$ and $\dbtuple{a}[v_s]=0$ for the tuple $\dbtuple{a}\in T$. 
Since there is no tuple $\dbtuple{b}\in T$ with $\dbtuple{b}[v_s]=a$, the tuples corresponding to self-attacking arguments do not belong to any repair due to $i_s$.
Henceforth, we assume that AF-databases include the attributes $\{u_s,v_s\}$ and the ID $\incas{u_{s}}{v_s}$ for self-attacking arguments.
The equivalence between repairs of the conflict database and conflict-free (naive) extensions (Thm.~\ref{thm:conf-naive}) does not require the ID $i_s$ as the conflict-free extensions in an AF $\calF$ and those in the AF $\calF'$ obtained from $\calF$ by removing self-attacking arguments, coincide.

We are ready to prove that admissible and preferred extensions for $\calF$ coincide with repairs (max. repairs) for $\AF{A}{F}$.
%
\begin{theorem}\label{thm:adm-pref}
	Let $\calF$ be an AF and $\AF{A}{F}$ denotes its corresponding AF-database. Then, for every $S\subseteq A$, $S$ is admissible (resp., preferred) in $\calF$ iff $S_T \subseteq T$ is a repair (max. repair) for $\AF{A}{F}$.
\end{theorem}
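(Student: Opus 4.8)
The plan is to reduce the statement to the two characterizations already proved, by exploiting that the constraint set $D=F\cup I$ of $\AF{A}{F}$ splits cleanly into a conflict part and a defense part acting on disjoint attributes. First I would observe that the FDs in $F$ mention only the attributes $\{x_i\}\cup\{n\}$, whereas the IDs in $I$ (including the self-attack ID $i_s$) mention only the attributes $\{u_a,v_a\}$. By construction, the restriction of the combined table $T$ to the conflict attributes is exactly the conflict database $\AF{C}{F}$, and its restriction to the defense attributes is exactly the defense database $\AF{D}{F}$. Since satisfaction of an FD depends only on the columns it names, and likewise for an ID, for every $S\subseteq A$ we get $S_T\models D$ iff $S_T\models F$ and $S_T\models I$. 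Hence $S_T$ is a repair for $\AF{A}{F}$ iff $S_T$ is simultaneously a repair for $\AF{C}{F}$ and for $\AF{D}{F}$.

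Next I would plug in the earlier results. By Theorem~\ref{thm:conf-naive}, $S_T\models F$ iff $S$ is conflict-free (ignoring self-attacks), and by Lemma~\ref{lem:defend-ids} together with the added ID $i_s$, $S_T\models I$ iff $S$ defends itself in $\calF$ and $S$ contains no self-attacking argument. By definition $S$ is admissible precisely when it is conflict-free and every $a\in S$ is defended by $S$, i.e.\ when $S$ is conflict-free and $S$ defends itself. Thus the conjunction of the two database conditions matches admissibility exactly, giving $S$ admissible iff $S_T$ is a repair for $\AF{A}{F}$. The one point needing care is the bookkeeping for self-attacking arguments: such an $a$ can never lie in a conflict-free set, and on the database side the tuple $\dbtuple{a}$ violates $i_s$ because no tuple carries the value $a$ on attribute $v_s$; so both sides exclude exactly these arguments and the exclusions line up.

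For the preferred/maximal-repair half, I would rely on the fact that $S\mapsto S_T$ is an order-preserving bijection between $2^A$ and $2^T$, so that $S\subseteq S'$ iff $S_T\subseteq S'_T$. Combined with the admissible-vs-repair equivalence just established, this means $S$ is a subset-maximal admissible set iff $S_T$ is a subset-maximal repair, which is precisely the correspondence between preferred extensions and maximal repairs.

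The main obstacle I anticipate is not the logical combination, which is routine once the split is in place, but rather justifying rigorously that the conflict and defense parts are genuinely non-interacting. Concretely, I would need to verify that reusing FDs via the edge-colouring of Lemma~\ref{lem:few-FDs} and the auxiliary $0$-values populating the defense columns do not create spurious agreements or supports that couple $F$ and $I$; this is exactly where the disjointness of the two attribute blocks, and the choice of $0$ (rather than repeating argument names) in the defense part, does the work.
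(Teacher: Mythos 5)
Your proposal is correct and follows essentially the same route as the paper's proof: decompose the constraint set into the FD part (handled by Theorem~\ref{thm:conf-naive}), the ID part (handled by Lemma~\ref{lem:defend-ids}), and the self-attack ID $i_s$, then combine with the definition of admissibility and use the order-preserving bijection $S\mapsto S_T$ for the preferred/maximal case. Your explicit observation that the FDs and IDs act on disjoint attribute blocks, so that satisfaction of $D$ factors through the conflict and defense sub-databases, is a useful clarification that the paper leaves implicit but does not change the argument.
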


\begin{proof}
	Let $\calF$ be an AF and $\AF{A}{F}$ denotes its corresponding AF-database.
	Recall that we defined $\AF{A}{F} =\langle T, D\rangle$, where $T$ is the database obtained by \emph{combining} the conflict and defense databases, and $D=F\cup I \cup \{i_s\}$ consists of a collection of FDs $D$, IDs $I$, and the dependency $i_s$ for self-attacking arguments (assuming that $s$ is a fresh name not in $A$). 
	Specifically, $\AF{A}{F}$ has the following components.
	Suppose $R'\subseteq R$ denotes attacks $(a,b)$ where $a\neq b$.
	\begin{itemize}
		%
		\item $F \dfn \{\depas{x_i}{n} \mid r_i\in R'\}$ and $I\dfn \{\incas{u_a}{v_a} \mid a\in A\}$ and $i_s \dfn \incas{u_s}{v_s}$.
		\item $T\dfn \{\dbtuple{a}\mid a\in A\}$ is a database over attributes $\{x_i \mid r_i\in R'\} \cup \{n\}\cup \{u_a,v_a \mid a\in A\} \cup \{u_s,v_s\}$. The tuples of $T$ are specified as follows: (1) for each $\dbtuple{a}\in T$, $\dbtuple{a}[n] = a$,
		(2) for each $r_i = (a,b)\in R'$ (i.e., $a\neq b$): 
		\begin{align*}
			& \dbtuple{a}[x_i]=r_i, 
			& \dbtuple{b}[x_i] = r_i,\\
			& \dbtuple{a}[u_b] = b, 
			& \dbtuple{a}[v_b] = b,\\
			& \dbtuple{b}[u_a]= a, 
			& \dbtuple{b}[v_a]= 0,
		\end{align*} 
		(3) for each $a\in A$ with $(a,a)\in R$, $\dbtuple a[u_s]=a$ and $\dbtuple a[v_s]=0$,
		and (4) for each $\dbtuple{c} \in T$ and variables $x_i,u_d, v_d, u_s,v_s $ not already assigned: $\dbtuple{c}[x_i]  = c$,  $\dbtuple{c} [u_d] = 0$, $\dbtuple{c}[v_d]= 0$, $\dbtuple c[u_s]=0$, and $\dbtuple c[v_s]=0$.
		
	\end{itemize}
	We first prove the correspondence between admissible extensions in $\calF$ and repairs for $\AF{A}{F}$.
	Let $S\subseteq A$ be admissible in $\calF$.
	Then, the corresponding set of tuples $S_T\subseteq T$ satisfies each FD in $F$ since $S$ is conflict-free~(Thm.~\ref{thm:conf-naive}).
	Further, note that $S$ does not contain any argument $a\in A$ for which $(a,a)\in R$. Therefore, $S_T$ also satisfies the ID $i_s$, as $\dbtuple a [u_s] = 0 =\dbtuple a [v_s]$ is true for every $\dbtuple a\in S_T$.
	Finally, $S_T$ satisfies each ID in $I$ due to Lemma~\ref{lem:defend-ids}, since $S$ defends each of its argument in $\calF$.
	As a consequence, $S_T$ is a repair for $\AF{A}{F}$.
	
	Conversely, assume $S_T$ is a repair for $\AF{A}{F}$.
	Then, $S_T$ can not contain any tuple $\dbtuple t \in T$ for which $\dbtuple t[u_s]\neq 0$ since $S_T$ would fail the ID $i_s$ otherwise.
	Therefore, the corresponding set $S$ of arguments does not contain any argument that attacks itself.
	Now, we can apply Theorem~\ref{thm:conf-naive} to conclude that $S$ is conflict-free.
	Moreover, the admissibility of $S$ follows from the fact that $S_T$ satisfies each ID in $I$ together with Lemma~\ref{lem:defend-ids}.
	This completes the proof in the converse direction.
	
	This establishes the proof for admissible extensions and repairs.
	Finally, the maximality of extensions in $\calF$ coincides with the maximality of repairs in $\AF{A}{F}$ because of the correspondence between arguments in $\calF$ and tuples in $\AF{A}{F}$.
	This concludes the proof for preferred extensions and subset-maximal repairs for the AF-database.
\end{proof}

%
Regarding the size of the AF-database and the time to construct it for a given AF, we have the following result.

\begin{theorem}\label{thm:AF-time}
	Let $\calF=(A,R)$ be an AF.
	Then, its corresponding AF-database $\AF{A}{F}$ can be constructed in polynomial time in the size of $\calF$.
	Additionally, $\AF{A}{F}$ has a table of size and $|A|\times 3(|A|+1)$ and uses (at most) $|A|$-many FDs and $(|A|+1)$ IDs, for an AF with $|A|$ arguments. 
\end{theorem}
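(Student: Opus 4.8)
The plan is to establish the two claims of Theorem~\ref{thm:AF-time} separately: the size bounds on the table and the dependency sets, and the polynomial-time constructibility. Both follow by essentially counting the ingredients that go into the combined AF-database $\AF{A}{F}=\langle T,D\rangle$, reusing the two previous size theorems (Thm.~\ref{thm:conf-naive-time} for the conflict part and Thm.~\ref{thm:def-time} for the defense part) rather than re-deriving everything from scratch.

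First I would count the attributes. The AF-database combines the conflict attributes $\{x_i \mid r_i\in R'\}\cup\{n\}$ with the defense attributes $\{u_a,v_a\mid a\in A\}$, plus the two self-attack attributes $\{u_s,v_s\}$. By Lemma~\ref{lem:few-FDs} (via Vizing's theorem on edge-coloring $\calF$), the number of $x_i$-attributes can be reduced to at most $\gamma+1\le |A|$, where $\gamma$ is the degree of $\calF$; adding the name attribute gives at most $|A|+1$ conflict-side columns. The defense side contributes $2|A|$ columns, and the self-attack pair contributes $2$ more, for a total of at most $(|A|+1)+2|A|+2=3|A|+3=3(|A|+1)$ columns. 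Since $T$ has exactly one tuple $\dbtuple a$ per argument $a\in A$, there are $|A|$ rows, yielding the claimed table size $|A|\times 3(|A|+1)$. For the dependencies, $F$ contributes at most $|A|$ FDs (again by Lemma~\ref{lem:few-FDs}), $I$ contributes one ID $\incas{u_a}{v_a}$ per argument for $|A|$ IDs, and the self-attack dependency $i_s$ adds one more, giving at most $|A|+1$ IDs in total, exactly as stated.

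For the running time, I would argue that each component is computable in polynomial time and then assemble them. Thm.~\ref{thm:conf-naive-time} already gives that the conflict part — including the edge-coloring step, for which a polynomial algorithm such as~\cite{MISRAG} suffices — is constructed in polynomial time; Thm.~\ref{thm:def-time} gives the same for the defense part, since one only needs to scan the incoming and outgoing attacks of each argument. The self-attack columns and the ID $i_s$ require a single pass over $R$ to detect pairs $(a,a)$. Filling in the default values ($\dbtuple c[x_i]=c$, $\dbtuple c[u_d]=\dbtuple c[v_d]=0$, etc.) touches each of the $|A|\times 3(|A|+1)$ cells a constant number of times. Hence the entire construction runs in time polynomial in $|\calF|$.

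I do not expect a genuine obstacle here, as the theorem is a bookkeeping corollary of the two preceding size/time results; the only point demanding care is the attribute count. One must not double-count: the name attribute $n$ is shared across the combined table rather than duplicated, and the $\gamma+1$ bound on the $x_i$-columns must be invoked to keep the conflict side at $|A|+1$ rather than $|R|+1$, which would otherwise be quadratic. I would therefore state the per-component counts explicitly and sum them, flagging the Vizing bound as the one non-trivial ingredient, before concluding that the combined bounds $|A|\times 3(|A|+1)$, $|A|$ FDs, and $(|A|+1)$ IDs hold and that the construction is polynomial.
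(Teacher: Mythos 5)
Your proposal is correct and follows essentially the same route as the paper's proof: it invokes Theorem~\ref{thm:conf-naive-time} for the conflict part ($|A|\times(|A|+1)$ table, $|A|$ FDs via Lemma~\ref{lem:few-FDs}), Theorem~\ref{thm:def-time} for the defense part ($|A|\times 2|A|$ table, $|A|$ IDs), adds the two self-attack attributes and the ID $i_s$, and sums to $|A|\times 3(|A|+1)$ with at most $|A|$ FDs and $|A|+1$ IDs, with polynomial-time constructibility inherited from the two component constructions. The explicit arithmetic and the caution about not inflating the conflict side to $|R|+1$ columns match the paper's reasoning exactly.
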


\begin{proof}
	Let $\calF =(A,R)$ be an AF and $\AF{A}{F}=\langle T,F\rangle$ be the corresponding AF-database.
	Then, $T$ has exactly one tuple ${\dbtuple{a}}$ for each argument $a\in A$.
	Therefore, the number of rows (tuples) in $T$ is the same as $|A|$.
	Due to Theorem~\ref{thm:conf-naive-time}, we know that the table in the conflict database $\AF{C}{F}$ has the size $|A|\times (|A|+1)$, and there are $|A|$-many FDs.
	Moreover, due to Theorem~\ref{thm:def-time}, the defense database has the size $|A|\times 2|A|$, and there are $|A|$-many IDs
	Finally, there are two attributes $\{u_s,v_s\}$ and an ID $i_s$ to cover self-attacking arguments.
	This results in an AF-database of size $|A|\times 3(|A|+1)$ with at most $|A|$-many FDs, and $(|A|+1)$-many IDs.
	The claim that the database can be constructed in polynomial time follows from the proofs of Theorems~\ref{thm:conf-naive-time} and \ref{thm:def-time}.
	This completes the proof to our theorem.
\end{proof}

For symmetric AFs~\cite{Coste-MarquisDM05} (i.e., $(a,b)\in R$ iff $ (b,a)\in R$ for every $a,b\in A$), the stable, preferred and naive extensions coincide. 
As a result, one only needs the conflict database and FDs to establish the equivalence between repairs and extensions. 
\begin{corollary}
	Let $\calF$ be a symmetric AF 
	and $\AF{C}{F}$ its conflict database. Then, for $S\subseteq A$ and $\sigma\in \{\naive,\stab,\pref\}$, $S\in\sigma(\calF)$ iff $S_T \subseteq T$ is a subset-maximal repair for $\AF{C}{F}$.
\end{corollary}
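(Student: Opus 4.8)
The plan is to factor all three semantics through the naive case and then invoke Theorem~\ref{thm:conf-naive}. The key observation is that for symmetric (irreflexive) AFs the semantics $\naive$, $\stab$, and $\pref$ collapse onto the naive extensions, a classical fact~\cite{Coste-MarquisDM05}; once this collapse is in place, the corollary is immediate, since Theorem~\ref{thm:conf-naive} already equates naive extensions with subset-maximal repairs of the conflict database, and no IDs are needed.

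First I would establish (or simply cite) the collapse $\naive(\calF)=\stab(\calF)=\pref(\calF)$. The underlying reasons are short and I would only sketch them. In a symmetric AF without self-attacks, every conflict-free set $S$ is already admissible: whenever $(b,a)\in R$ with $a\in S$, symmetry gives $(a,b)\in R$, so $a\in S$ itself defends $a$ against $b$. Hence conflict-freeness and admissibility coincide as set properties, and therefore so do their subset-maximal elements, giving $\naive(\calF)=\pref(\calF)$. For stability, if $S$ is subset-maximal conflict-free and $b\notin S$, then $S\cup\{b\}$ contains a conflict; since $b$ does not attack itself, the conflict involves some $c\in S$, and by symmetry $c$ attacks $b$, so $S$ attacks every outside argument and is stable. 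As every stable set is naive, this also yields $\stab(\calF)=\naive(\calF)$.

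Second, with the collapse available, I would chain it with Theorem~\ref{thm:conf-naive}, which states that $S$ is naive in $\calF$ iff $S_T$ is a subset-maximal repair for $\AF{C}{F}$. Composing the two equivalences gives, for every $\sigma\in\{\naive,\stab,\pref\}$, that $S\in\sigma(\calF)$ iff $S$ is naive in $\calF$ iff $S_T\subseteq T$ is a subset-maximal repair for $\AF{C}{F}$, which is exactly the claim.

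The main obstacle is the treatment of self-attacking arguments. The conflict database $\AF{C}{F}$ uses FDs only, Theorem~\ref{thm:conf-naive} is stated for AFs without self-attacks, and the collapse above genuinely requires irreflexivity (a self-attacking $a$ is never conflict-free, yet its presence would break both stability and the symmetric defense argument). I would therefore make explicit that symmetric AFs are taken to be irreflexive, as is standard in~\cite{Coste-MarquisDM05}; this is precisely the regime in which Theorem~\ref{thm:conf-naive} and the three-way collapse both hold, so that the additional ID~$i_s$ is unnecessary and the conflict database alone suffices.
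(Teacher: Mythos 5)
Your proposal is correct and follows essentially the same route as the paper, which likewise derives the corollary from the collapse $\naive(\calF)=\stab(\calF)=\pref(\calF)$ for symmetric AFs (citing Coste-Marquis et al.) combined with Theorem~\ref{thm:conf-naive}. Your explicit sketch of the collapse and your remark that irreflexivity must be assumed (so that Theorem~\ref{thm:conf-naive} applies and the ID $i_s$ is unnecessary) are in fact more careful than the paper's one-line justification.
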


We conclude this section by observing that one can not simulate AFs via inconsistent databases using only one type of dependencies (FDs or IDs) under the complexity theoretic assumption that the class $\Ptime$ is different from $\NP$.
This holds because the problem to decide  
the existence of a non-empty repair for a database instance comprising only FDs or IDs 
is in $\Ptime$~\cite{mahmood2024computing}, 
whereas the problem to decide whether an AF has a non-empty admissible-extension 
is $\NP$-complete. 
The stated claim holds due to 
Theorem~\ref{thm:adm-pref}.
\section{$\hspace{-.6em}$Attribute-Based Repairs $\hspace{-.1em}$\&$\hspace{-.1em}$ Other Semantics$\hspace{-1em}$}\label{sec:rem}

Besides subset maximality, we introduce maximal repairs following the principle of \emph{content preservation}~\cite{Wijsen:2003}.
The motivation behind these \emph{attribute-based repairs} is to adjust subset-maximality by retaining maximum possible values from $\dom(T)$  for certain attributes of $T$.
Let $\calD= \langle T,D\rangle $ where $T$ is a database with values $\dom(T)$ and $D$ is a set of dependencies.
Let $X$ be a sequence of attributes in $T$ and $P\subseteq T$, then $P[X]$ denotes the database obtained from $P$ by restricting the attributes to $X$ and $\dom(P[X])$ denotes the set of values in $P[X]$.
Given $\calD$ and a sequence $X$ of attributes, we say that $P\subseteq T$ is a \emph{maximally covering repair} for $\calD$ with respect to $X$ if $P\in \repairs(\calD)$ and there is no $P'\in \repairs(\calD)$ such that $\dom(P'[X])\supset \dom(P[X])$.
Moreover, we say that $P$ is a \emph{fully covering repair} for $\calD$ with respect to $X$ if $\dom(P[X])=\dom(T[X])$.
In other words, a maximal covering repair retains maximum possible values from $\dom(T[X])$, whereas a fully covering repair takes all the values from $\dom(T[X])$.
If $X$ contains all the attributes in $T$ then we simply speak of maximally (fully) covering repair for $\calD$ without specifying $X$.
For a database instance $\calD$ and attributes $X$, by $\repairsmaxcov(\calD,X)$ (resp., $\repairsfullcov(\calD,X)$) we denote the set of all maximally (fully) covering repairs for $\calD$ with respect to $X$.
%

\begin{example}\label{ex:att-covering-rep}
	In the database from Example~\ref{ex:intro-rep},
	$\{\dbtuple{s}_1,\dbtuple{s}_3,\dbtuple{s}_5\}$ and $\{\dbtuple{s}_2,\dbtuple{s}_3,\dbtuple{s}_5\}$ are both subset-maximal as well as maximally covering repairs for $\calD$ considering all attributes.
	Moreover, $\{\dbtuple{s}_1,\dbtuple{s}_4,\dbtuple{s}_5\}$ is subset-maximal but not maximally covering (since $\{\dbtuple{s}_1,\dbtuple{s}_3,\dbtuple{s}_5\}$ covers more values).
	Notice that there is no fully covering repair for $\calD$. 
	Nevertheless, 
	$\{\dbtuple{s}_1,\dbtuple{s}_3,\dbtuple{s}_5\}$ is a fully covering repair for $\calD$ with respect to attributes $\{\texttt{Tutor},\texttt{Time},\texttt{Room}\}$ whereas $\{\dbtuple{s}_2,\dbtuple{s}_3,\dbtuple{s}_5\}$ is not. 
	Further, there is no fully covering repair for $\calD$ w.r.t $\{\texttt{Course}\}$ or $\{\texttt{Advisor}\}$.
\end{example}

A maximally covering repair 
is also subset-maximal for any instance $\calD$, whereas, the reverse is not true in general. 
Moreover, a fully covering repair is also maximally covering, though it may not always exist.
Observe that, certain attribute may not be fully covered in any repair and requiring the repairs to fully cover the domain with respect to all the attributes may not always be achievable.
This motivates the need to focus on specific attributes in a database and include repairs fully covering the values in those attributes.

\subsection{Stable, Semi-stable, and Stage Semantics}\label{sec:sss}%
We prove that stable, semi-stable and stage extensions in AFs can also be seen as subset-repairs of the inconsistent DBs.
This is achieved by dropping subset-maximality and utilizing the covering semantics for repairs.
Notice that, the three semantics require maximality for the range $S^+$ for a set $S$ of arguments.
%
%
Recall that the attributes $\{v_x \mid x\in A\}$ in the defense database encode arguments attacking $x\in A$.
Together with the \emph{name} attribute ($n$) from the conflict database, we can encode the range of a set $S$ of arguments via an AF-database.
Let $X_r \dfn \{v_x \mid x\in A\}\cup \{n\}$ denote the \emph{range} attributes. 
The following lemma proves the relation between values taken by a set $S_T$ of tuples under attributes $X_r$ and the range $S^+$ for a set $S\subseteq A$ of arguments.
\begin{lemma}\label{lem:range}
	Let $\calF$ be an AF and $\AF{A}{F}$ denote its AF-database.
	Then, for every set $S\subseteq A$ of arguments and corresponding set $S_T\subseteq T$ of tuples, $S^+ = \dom(S_T[X_r])\setminus \{0\}$.
\end{lemma}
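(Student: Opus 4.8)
The plan is to prove the set equality $S^+ = \dom(S_T[X_r])\setminus\{0\}$ by double inclusion, unpacking the definition of both sides using the construction of the AF-database and the meaning of the range attributes $X_r = \{v_x \mid x\in A\}\cup\{n\}$. Recall that by definition $S^+_R = S\cup\{a \mid (b,a)\in R, b\in S\}$, so an element lies in $S^+$ exactly when it is either a member of $S$ itself or an argument attacked by some member of $S$. On the database side, $\dom(S_T[X_r])$ collects all values appearing in the tuples of $S_T$ restricted to the range attributes; the goal is to show these values, after discarding the placeholder $0$, are precisely the arguments in $S^+$.

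\medskip

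First I would handle the contribution of the name attribute $n$. By construction $\dbtuple{a}[n]=a$ for every $a\in A$, so $\dom(S_T[\{n\}]) = S$ exactly. This already accounts for the $S$-part of $S^+$. Next I would analyze the attributes $v_x$ for $x\in A$. By the defense-database construction, $\dbtuple{a}[v_b]=b$ precisely when $(a,b)\in R$, and otherwise $\dbtuple{a}[v_d]=0$ (for non-attacked $d$). Hence a nonzero value $b$ appears in column $v_b$ of some tuple $\dbtuple{a}\in S_T$ if and only if $a\in S$ and $(a,b)\in R$, i.e. $b$ is attacked by a member of $S$. Collecting these nonzero values over all $v_x$ columns therefore yields exactly $\{b \mid (a,b)\in R,\, a\in S\}$, which is the second part of $S^+$.

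\medskip

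For the forward inclusion $S^+\subseteq \dom(S_T[X_r])\setminus\{0\}$, I would take $a\in S^+$ and split into the two cases: if $a\in S$ then $\dbtuple{a}[n]=a$ witnesses membership; if $a$ is attacked by some $b\in S$ then $(b,a)\in R$ gives $\dbtuple{b}[v_a]=a$, and since $a\neq 0$ this value survives the removal of $\{0\}$. For the reverse inclusion, I would take any nonzero value $w\in\dom(S_T[X_r])$ and note it arises either from the $n$-column (forcing $w\in S\subseteq S^+$) or from some $v_x$-column of a tuple $\dbtuple{a}\in S_T$; in the latter case the only way a column $v_x$ receives a nonzero entry is $\dbtuple{a}[v_x]=x$ with $(a,x)\in R$, so $w=x$ is attacked by $a\in S$ and thus $w\in S^+$. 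The main subtlety to verify carefully is that the placeholder $0$ is genuinely distinct from every argument name (so that discarding $\{0\}$ removes exactly the non-informative entries and nothing else), and that the self-attack attributes $\{u_s,v_s\}$ are not among the range attributes $X_r$, so they do not interfere; both follow directly from the conventions fixed in the construction.
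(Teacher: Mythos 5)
Your proposal is correct and follows essentially the same route as the paper's proof: the forward inclusion is the identical case split ($s\in S$ witnessed by the name attribute $n$, $s$ attacked by some $t\in S$ witnessed by $\dbtuple{t}[v_s]=s$), and your direct argument for the reverse inclusion is just the contrapositive form of the paper's. Your explicit remarks on the placeholder $0$ and the exclusion of the self-attack attributes from $X_r$ are sound and only make the same argument slightly more careful.
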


\begin{proof}
	Let $s\in S^+$. 
	If $s\in S$, then $s\in \dom(S_T[X_r]) $ since $\dbtuple{s}[n]=s$ and $\dbtuple s\in S_T$. 
	Otherwise, there exists $t\in S$ such that $(t,s)\in R$. 
	Then, $\dbtuple{t}[v_s]=s$ and therefore $s\in \dom(S_T[X_r])$ once again as $\dbtuple t\in S_T$.
	Conversely, suppose $s\not \in S^+$.
	Since $s\not \in S$, this implies that $s$ does not appear in the column $S_T[n]$.
	Further, since $s$ is not attacked by any argument in $S$, there is no tuple $\dbtuple{t}\in S_T$ such that $\dbtuple{t}[v_s]=s$ and therefore $s\not\in \dom(S_T[X_r])$.
	Finally, the element $0\in \dom(S_T[X_r])$ but it does not correspond to any argument and therefore removed to establish the claim.
\end{proof}


Now, we consider the AF-database for an AF~$\calF$ with range attributes $X_r$ 
to determine values in $\dom(T)$ covered by any repair.
Since stable and stage semantics only require conflict-freeness, 
we drop the admissibility and hence the set of IDs, except for the ID ($i_s$) for self-attacking arguments. 
\begin{theorem}\label{thm:stab-stag}
	Let $\calF$ be an AF and $\AF{A}{F}$ denotes its corresponding AF-database where $\AF{A}{F}=\langle T, F\cup I\rangle$. 
	Moreover, let $i_s$ be the ID for self-attacking arguments, $\AF{C}{F} = \langle T, F\cup\{i_s\}\rangle$,  and let $X_r = \{v_x \mid x\in A\}\cup \{n\}$.
	Then, for every $S\subseteq A$ and corresponding $S_T\subseteq T$,
	\begin{itemize}
		\item $S\in \stab(\calF)$ iff $S_T\in\repairsfullcov(\AF{C}{F}, X_r)$,
		\item $S\in \stag(\calF)$ iff $S_T\in\repairsmaxcov(\AF{C}{F},X_r)$
		\item $S\in \semistab(\calF)$ iff $S_T\in\repairsmaxcov(\AF{A}{F},X_r)$
	\end{itemize}
\end{theorem}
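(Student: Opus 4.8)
The plan is to reduce all three equivalences to ingredients we already have: the repair characterizations of conflict-free and admissible sets (Theorem~\ref{thm:conf-naive}, read together with the self-attack ID $i_s$, and Theorem~\ref{thm:adm-pref}), and the range identity of Lemma~\ref{lem:range}. Each of stable, stage and semi-stable is defined by imposing a condition on the range $S^+$ on top of conflict-freeness (resp.\ admissibility), and each covering notion imposes a condition on $\dom(S_T[X_r])$ on top of being a repair. So the only genuinely new work is to show that conditions on $S^+$ and on $\dom(S_T[X_r])$ match up; the conflict-free/admissible layer is already handled by the prior theorems.

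First I would record a \emph{bridging claim}: for every \emph{nonempty} conflict-free $S\subseteq A$ one has $\dom(S_T[X_r]) = S^+\cup\{0\}$. Lemma~\ref{lem:range} already gives $S^+=\dom(S_T[X_r])\setminus\{0\}$, so it remains only to note that $0\in\dom(S_T[X_r])$: since $S$ is conflict-free, no $a\in S$ attacks itself, so the construction assigns $\dbtuple a[v_a]=0$ for each $a\in S$, and $v_a\in X_r$. (For $S=\emptyset$ one instead has $\dom(\emptyset[X_r])=\emptyset$.) From this claim two consequences are immediate for nonempty conflict-free $S,S'$: the strict inclusion $\dom(S'_T[X_r])\supsetneq\dom(S_T[X_r])$ holds iff $(S')^+\supsetneq S^+$; and, applying Lemma~\ref{lem:range} with $S'=A$ to get $\dom(T[X_r])=A\cup\{0\}$, full coverage $\dom(S_T[X_r])=\dom(T[X_r])$ holds iff $S^+=A$.

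Then each item is a short chain. For \textbf{stable}, I use that $S$ is stable iff $S$ is conflict-free and $S^+=A$ (a conflict-free set attacking everything outside is automatically admissible, so the admissibility precondition is free); by Theorem~\ref{thm:conf-naive} with $i_s$, conflict-freeness of $S$ is equivalent to $S_T$ being a repair of $\AF{C}{F}=\langle T,F\cup\{i_s\}\rangle$, and by the bridging claim $S^+=A$ is equivalent to full coverage, so together this is exactly $S_T\in\repairsfullcov(\AF{C}{F},X_r)$. For \textbf{stage}, I replace ``$S^+=A$'' by ``$S^+$ maximal among conflict-free sets''. Since the repairs of $\AF{C}{F}$ are precisely the sets $S'_T$ for conflict-free $S'$, the nonexistence of a repair $P'$ with $\dom(P'[X_r])\supsetneq\dom(S_T[X_r])$ translates, via the bridging claim, into the nonexistence of a conflict-free $S'$ with $(S')^+\supsetneq S^+$, which is precisely the definition of stage. \textbf{Semi-stable} is identical, only using the full database $\AF{A}{F}$ and Theorem~\ref{thm:adm-pref}: repairs of $\AF{A}{F}$ are exactly the $S'_T$ with $S'$ admissible, so maximal coverage over $\AF{A}{F}$ corresponds to maximality of $S^+$ among admissible sets, i.e.\ to semi-stability.

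The main obstacle I anticipate is bookkeeping around the value $0$ and the degenerate cases rather than anything conceptual. I must ensure the bridging claim holds uniformly so that strict domain inclusions match strict range inclusions, and I must treat the empty extension separately, since for it $\dom(S_T[X_r])=\emptyset$ lacks the $0$; here one checks that $\emptyset$ is stage/semi-stable exactly when no nonempty conflict-free/admissible set exists, which is also exactly when $\emptyset$ is the only repair and hence (trivially) max-covering. The analogous boundary case is an AF in which every argument self-attacks, where the only conflict-free set, the only admissible set, the only repair, and the only stable/stage/semi-stable candidate all collapse to $\emptyset$, keeping both sides of each equivalence consistent. Once this $0$-bookkeeping is settled, all three statements follow from Lemma~\ref{lem:range} and the two prior repair theorems.
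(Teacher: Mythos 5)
Your proposal is correct and follows essentially the same route as the paper's proof: both reduce the conflict-free/admissible layer to Theorems~\ref{thm:conf-naive} and~\ref{thm:adm-pref}, and both use Lemma~\ref{lem:range} together with the observation that $\dbtuple{s}[v_s]=0$ for every non-self-attacking $s$ to translate conditions on $S^+$ into conditions on $\dom(S_T[X_r])$. Your explicit bridging claim $\dom(S_T[X_r])=S^+\cup\{0\}$ and the separate treatment of $S=\emptyset$ are slightly more careful than the paper's exposition, but they formalize exactly the same argument.
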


\begin{proof}
	
	We first prove the claim regarding stable semantics.
	Let $S$ be stable in $\calF$, then $S_T\in \repairs(\calC)$ since $S$ is conflict-free.
	Notice that $T[X_r]$ only contains argument names and the element $0$.
	Since $\dbtuple{s}[v_s]=0$ for any $\dbtuple{s}\in S_T$, therefore $0\in \dom(S_T[X_r])$.
	Further, since every $x\in A\setminus S$ is attacked by $S$, hence there is some $s\in S$ with $(s,x)\in R$.
	Consequently, $\dom(S_T[X_r]) = \dom(T[X_r])$ and $S_T\in \repairsfullcov(\calC,X_r)$. 
	Conversely, suppose $S\not\in \stab(\calF)$.
	If $S$ is not conflict-free then clearly $S_T\not\in \repairsfullcov(\calC,X_r)$ (due to Thm.~\ref{thm:conf-naive}).
	Whereas, if $S$ is conflict-free, then $S^+\neq A$ (otherwise $S$ is stable in $\calF$).
	This implies that there is some $x\in A\setminus S$ such that $(s,x)\not \in R$ for any $s\in S$.
	Then, $x\not \in \dom(S_T[X_r])$.
	This implies, $\dom(S_T[X_r])\neq \dom(T[X_r])$ since $x\in \dom(T[X_r])$.
	As a result, $S_T$ is not a fully covering repair for $\calC$ w.r.t. $X_r$ and this establishes the claim.
	
	Next, we prove the third claim, whereas, the second claim follows a similar argument.
	Let $S\in \semistab(\calF)$, then $S$ is admissible and there is no admissible $S'\subseteq A$ in $\calF$ with $S^+_R\subsetneq (S')^{+}_R$. 
	Suppose to the contrary that $S_T\not\in \repairsmaxcov(\calA,X_r)$.
	Since $S_T\in\repairs(\calA)$ (Thm.~\ref{thm:adm-pref}), there must be some $P\in \repairs(\calA)$ such that $\dom(P[X_r])\supset \dom(S_T[X_r])$.
	Let $P_\calF$ denote the set of arguments corresponding to tuples in $P$. 
	Then, this leads to a contradiction that $S\in \semistab (\calF)$ since $S$ and $P_\calF$ are both admissible in $\calF$ and $P_\calF^+\supset S^+$ due to Lemma~\ref{lem:range}.
	Conversely, suppose $S_T\in \repairsmaxcov(\calA,X_r)$, then $S_T\in\repairs(\calA)$ and hence $S$ is admissible in $\calF$.
	Now, the fact that no admissible $S'$ in $\calF$ exists with $(S')^+\supset S^+$ follows since $S_T$ is a maximally covering repair for $\AF{A}{F}$ w.r.t $X_r$ and  $\dom(S_T[X_r])$ is maximum with respect to all the repairs for $\AF{A}{F}$.
	This established the claim proof in reverse direction.
	
	The claim for the stage semantics follows from the above proof after replacing the admissibility by conflict-freeness.
	This completes the proof to our theorem.
\end{proof}

For stable and stage semantics, the size of the database can be reduced since there are no IDs involved (except $i_s$).

\begin{corollary}\label{cor:stab-stag-time}
	Let $\calF=(A,R)$ be an AF.
	Then, its corresponding AF-database $\AF{C}{F}$ for stable and stage semantics has a table of size $|A|\times (2|A|+3)$, and uses $|A|$-many FDs with a single ID.
\end{corollary}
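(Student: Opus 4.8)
The plan is to establish the size bounds for the corrected database $\AF{C}{F}=\langle T, F\cup\{i_s\}\rangle$ used for stable and stage semantics, and to argue its polynomial-time constructibility. Unlike the full AF-database of Theorem~\ref{thm:AF-time}, here we only retain the conflict part (FDs) together with a single ID $i_s$ for self-attacking arguments, so the defense attributes $\{u_a,v_a\mid a\in A\}$ are no longer needed for \emph{all} arguments. The key observation is that for stable and stage semantics, admissibility is dropped, hence the IDs in $I$ (which encode defense) can be discarded; only $i_s$ remains to exclude self-attacking arguments.

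First I would count the rows: as in the proof of Theorem~\ref{thm:AF-time}, $T$ contains exactly one tuple $\dbtuple{a}$ per argument $a\in A$, so $T$ has $|A|$ rows. Next I would count the columns. Recall from Lemma~\ref{lem:range} and Theorem~\ref{thm:stab-stag} that the range attributes $X_r = \{v_x\mid x\in A\}\cup\{n\}$ are required to express coverage, so the $v_a$ attributes must be retained. Thus the surviving attributes are: the conflict attributes $x_i$ (bounded by $|A|$ via Lemma~\ref{lem:few-FDs}), the name attribute $n$, the $|A|$-many range attributes $\{v_a\mid a\in A\}$, and the two attributes $\{u_s,v_s\}$ for $i_s$. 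Summing gives at most $|A| + 1 + |A| + 2 = 2|A|+3$ columns, yielding the table size $|A|\times(2|A|+3)$. The count of dependencies follows immediately: there are at most $|A|$-many FDs (Theorem~\ref{thm:conf-naive-time}) and precisely one ID, namely $i_s$.

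For the runtime claim, I would simply invoke the polynomial-time construction of the conflict database from Theorem~\ref{thm:conf-naive-time}, which already handles the edge-coloring-based assignment of FDs, and note that computing the $v_a$ values requires only scanning the outgoing attacks of each argument, while $i_s$ is set by scanning for self-attacks. All of these are clearly polynomial in $|\calF|$.

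The main subtlety, rather than a genuine obstacle, is being careful about which defense attributes survive. One might initially expect both $u_a$ and $v_a$ attributes to vanish once IDs are dropped, but the range attributes $X_r$ crucially depend on the $v_a$ columns to encode $S^+$ via Lemma~\ref{lem:range}; only the $u_a$ columns (which had no role once the IDs in $I$ are removed) can be discarded. I expect the bookkeeping to reconcile exactly with the stated $2|A|+3$ bound, so the result is essentially a corollary of Theorems~\ref{thm:conf-naive-time} and~\ref{thm:AF-time} with a pruned attribute set.
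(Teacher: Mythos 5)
Your proposal is correct and follows essentially the same route as the paper: the paper's proof simply observes that the corollary follows from Theorem~\ref{thm:AF-time} by dropping the $|A|$ columns $\{u_a \mid a\in A\}$ (yielding $3(|A|+1)-|A| = 2|A|+3$ columns) while keeping the $v_a$ columns for the range attributes and the single ID $i_s$. Your more explicit column-by-column count and your remark on why the $v_a$ attributes must survive are just a fuller spelling-out of the same argument.
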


\begin{proof}
	The claim follows easily from the proof of Theorem~\ref{thm:AF-time}, since we can drop the columns corresponding to attributes  $\{u_a\mid a\in A\}$ from the AF-database  for stable and stage semantics.
\end{proof}

\section{Discussion, Conclusion, and Future Work}
We presented a database view of Dung's theory of abstract argumentation. 
While our results tighten the connection between two domains by simulating Dung's argumentation frameworks by inconsistent databases, they also provide the exact expressivity of AFs via integrity constraints. 
Roughly speaking: Dung's argumentation frameworks can be equivalently seen as inconsistent databases where ICs include functional and inclusion dependencies.
This strong connection allows us to transfer further ideas from one domain to another. 
As already pointed out, a repairing semantics based on the idea of maximizing \emph{range} from AFs yields maximal (full) content preservation across certain attributes. 
We propose this new family of repairs as a topic of 
interest. 

\noindent\textbf{Complete and Grounded Semantics.}
We note that the complete and grounded semantics may not have a natural counterpart in the subset-repairs setting for databases when FDs and IDs are employed.
Although, we can 
introduce a notion of closed repairs utilizing the connection between defending arguments and supporting tuples (Lemma~\ref{lem:def-sup}) to simulate the closure properties required by these semantics, this may not correspond to a natural definition for repairs in the databases.
Then the question arises: are there classes of ICs that can express these two semantics for AFs?
%

\paragraph*{Outlook and Future Research Directions.}
Our work is in line with the existing literature connecting the two domains; seeing tuples in the database as arguments~\cite{bienvenu2020querying,mahmood2024computing} or constructing arguments from KBs~\cite{vesic2012,croitoru2013}.
By inspecting the expressivity of AFs to integrity constraints, we believe to have opened a new research dimension within argumentation.
This study can be extended by exploring the expressivity of extensions and generalizations of AFs, e.g., dialectical frameworks (ADFs)~\cite{BrewkaWoltran10}, set-based AFs~\cite{NielsenP06} and claim-centric AFs~\cite{DvorakW20} to name a few.
Moreover, in claim-centric AFs, the so-called \emph{claim-level} semantics requires maximizing accepted sets of claims rather than arguments --- which can be covered by maximally covering repairs for suitable attributes.
Further, the connection between probabilistic~\cite{li2011probabilistic} and preference-based~\cite{kaci2018preference} AFs and their database-counterpart~\cite{lian2010consistent,staworko2012prioritized} also seems promising. 

The well-known reasoning problems in AFs include \emph{credulous} (resp., \emph{skeptical}) reasoning, asking whether an argument belongs to some (every) extension.
However, the natural reasoning problem for databases includes consistent query answering (CQA) under \emph{brave} or \emph{cautious} semantics.
A (Boolean) query is entailed bravely (resp., cautiously) from an inconsistent database if it holds in some (every) repair.
It is interesting to explore DB-variants of credulous and skeptic reasoning. That is, given an argument $a$ in an AF~$\calF$ and semantics $\sigma$, construct a query $q^{a}_\sigma$ such that $a$ is credulously (skeptically) accepted in $\calF$ under semantics $\sigma$ iff the query $q^{a}_\sigma$ is entailed bravely (cautiously) from the database $\AF{D}{F}$ for $\calF$ under appropriate repairing semantics $\rho_\sigma$.
Yet another interesting direction for future work is to utilize recently developed (so-called) decomposition-guided~\cite{fichte2021decomposition,fichte2023quantitative,hecher2024quantitative} reductions 
for reasoning in DBs.
These reductions would allow to establish tight runtime bounds for CQA via the known bounds for reasoning in AFs.
Exploring these directions is left for future work.

Finally, a complexity analysis for repairs involving maximally or fully covering semantics is on our agenda. 
Some complexity lower bounds for the decision problems (involving FDs and IDs) already follow from the results in this paper.
We next aim to target the complexity of repair checking and CQA involving not only FDs and IDs, but also more general types of constraints such as \emph{equality} and \emph{tuple} generating dependencies. 
It is noteworthy that these definitions for repairs introduce a preference on sets of tuples, distinguishing them from research on prioritized databases~\cite{staworko2012prioritized,kimelfeld2017detecting,bienvenu2020querying}, which typically involves tuple preferences. 

%

\clearpage
\paragraph*{Acknowledgement.}
The work has received funding from 
the Austrian Science Fund (FWF), grants J 4656 and P 32830, 
the Deutsche Forschungsgemeinschaft (DFG, German Research Foundation), grants TRR 318/1 2021 – 438445824, the European Union's Horizon Europe research and innovation programme within project ENEXA (101070305), 
the Society for Research Funding in Lower Austria (GFF, Gesellschaft für Forschungsf\"orderung N\"O), grant ExzF-0004, 
as well as the Vienna Science and Technology Fund (WWTF), grant ICT19-065, and 
the Ministry of Culture and Science of North Rhine-Westphalia (MKW NRW) within project SAIL, grant NW21-059D.

\bibliography{arxiv}


\begin{comment}
\clearpage
\section*{Reproducibility Checklist}

\noindent
\textbf{This paper:}

\begin{itemize}
	\item Includes a conceptual outline and/or pseudocode description of AI methods introduced: Yes.
	\item Clearly delineates statements that are opinions, hypothesis, and speculation from objective facts and results: Yes.
	\item Provides well marked pedagogical references for less-familiare readers to gain background necessary to replicate the paper: Yes.
\end{itemize}

\noindent
\textbf{Does this paper make theoretical contributions?} Yes.

\begin{itemize}
	\item All assumptions and restrictions are stated clearly and formally: Yes.
	\item All novel claims are stated formally (e.g., in theorem statements): Yes.
	\item Proofs of all novel claims are included: Yes.
	\item Proof sketches or intuitions are given for complex and/or novel results: Yes.
	\item Appropriate citations to theoretical tools used are given: Yes.
	\item All theoretical claims are demonstrated empirically to hold: N/A.
	\item All experimental code used to eliminate or disprove claims is included: N/A.
\end{itemize}

\noindent
\textbf{Does this paper rely on one or more datasets?} No.

\noindent
\textbf{Does this paper include computational experiments?} No.
\end{document}